\documentclass[runningheads]{llncs}
\usepackage{hyperref}
\usepackage[T1]{fontenc}
\usepackage[utf8]{inputenc}
\usepackage{xspace}
\usepackage{graphicx}
\usepackage{listings}
\usepackage{subcaption}
\usepackage{xcolor}
\usepackage{xspace}
\usepackage{float}
\usepackage{lstautogobble}
\usepackage{color}
\usepackage{amsmath}
\usepackage{amssymb}
\usepackage{stmaryrd}
\usepackage{proof}
\usepackage{multicol}
\usepackage{mathpartir}
\usepackage{todonotes}
\usepackage{cleveref}
\usepackage{mathpartir}
\usepackage{todonotes}
\usepackage{wrapfig}
\usepackage{mybnf}
\usepackage{mathpartir}
\usepackage{amsmath, amsfonts, amssymb}
\usepackage{stmaryrd}
\usepackage{todonotes}
\usepackage{orcidlink}

\definecolor{bluekeywords}{rgb}{0.13, 0.13, 1}
\definecolor{greentypes}{rgb}{0, 0.5, 0}
\definecolor{orangecomments}{rgb}{1, 0.5, 0.1}
\definecolor{redstrings}{RGB}{171, 114, 2}
\definecolor{graynumbers}{rgb}{0.5, 0.5, 0.5}
\definecolor{goldcomments}{rgb}{0.6, 0.4, 0.08}

\lstdefinelanguage{Lola}{
  keywords=[0]{input, output, trigger, constant, import, spawn, eval, close, with, when},
  moredelim=**[is][\color{greentypes}@]{@}{@},
  keywordstyle=[0]\bfseries\color{bluekeywords},
  keywords=[1]{if, then, else, aggregate, defaults, offset, by, or, to, sin, cos, abs, hold, over, using, over_instances},
  keywords=[2]{Variable, String, Int, Int64, UInt, UInt64, Bool, Float32, Float64, Float},
  keywordstyle=[2]\color{greentypes},
  sensitive=false,
  comment=[l]{//},
  morecomment=[s]{/*}{*/},
  morestring=[b]',
  morestring=[b]",
  literate={\\@}{@}1
}
\newcommand{\rtlola}[0]{RTLola\xspace}
\newcommand{\con}{\sqsubseteq}
\newcommand{\streamName}[1]{{\scriptsize\texttt{#1}}}
\newcommand{\streamKeyword}[1]{{\color{bluekeywords}\texttt{#1}}}

\newcommand{\outputs}[0]{\texttt{Output}}

\newcommand{\exprs}[0]{\texttt{Expr}}
\newcommand{\ids}[0]{\texttt{ID}}
\newcommand{\idsI}[0]{\ids^\uparrow}
\newcommand{\idsO}[0]{\ids^\downarrow}

\newcommand{\pacingAny}[0]{\texttt{Any}}

\newcommand\syn[1]{\texttt{\color{brown}{#1}}}
\newcommand{\constant}[1]{\syn{\texttt{Constant}(}#1\syn{)}}
\newcommand{\syncAccess}[2]{\syn{\texttt{Sync}(}#1, #2\syn{)}}
\newcommand{\parameterAccess}[1]{\syn{\texttt{Parameter}(}#1\syn{)}}
\newcommand{\offsetAccess}[3]{\syn{\texttt{Offset}(}#1, #2, #3\syn{)}}
\newcommand{\holdAccess}[2]{\syn{\texttt{Hold}(}#1, #2\syn{)}}

\newcommand{\functionExp}[2]{\syn{\texttt{Function}(}#1, #2\syn{)}}

\newcommand{\default}[2]{\syn{\texttt{Default}(}#1, #2\syn{)}}
\newcommand{\aggr}[4]{\syn{\texttt{Aggregate}(}#1, #2, #3, #4\syn{)}}
\newcommand{\tuple}[1]{\syn{\texttt{Tuple}(}#1\syn{)}}
\newcommand{\localFreq}[1]{\syn{\texttt{Local}(}#1\syn{)}}
\newcommand{\globalFreq}[1]{\syn{\texttt{Global}(}#1\syn{)}}
\newcommand{\eventAc}[1]{\syn{\texttt{Event}(}#1\syn{)}}

\newcommand{\sInput}[1]{\syn{\texttt{Input}(}#1\syn{)}}
\newcommand{\sOutput}[5]{\syn{\texttt{Output}(}#1, #2, #3, #4, #5\syn{)}}
\newcommand{\sSpawn}[3]{\syn{\texttt{Spawn}(}#1, #2, #3\syn{)}}
\newcommand{\sEval}[3]{\syn{\texttt{Eval}(}#1, #2, #3\syn{)}}
\newcommand{\sClose}[2]{\syn{\texttt{Close}(}#1, #2\syn{)}}

\newcommand{\VT}{\textit{VT}}
\newcommand{\vtPrim}{\textit{VT}_\textit{prim}}
\newcommand{\vtCon}{\con_\textit{VT}}
\newcommand{\vt}{v}

\newcommand{\PT}{\textit{PT}}
\newcommand{\ac}{\textit{AC}}
\newcommand{\ptCon}{\con_\textit{PT}}
\newcommand{\pt}{\tau}
\newcommand{\ptt}{\mathcal{T}}
\newcommand{\lf}[1]{\mathit{LocalFreq}(#1)}
\newcommand{\gf}[1]{\mathit{GlobalFreq}(#1)}
\newcommand{\periodic}{\mathit{Periodic}}
\newcommand{\eventT}[1]{\mathit{Event}(#1)}

\newcommand{\ST}{\textit{ST}}
\newcommand{\stCon}{\con_\textit{ST}}
\newcommand{\st}{\pi}
\newcommand{\stt}{\Pi}

\newcommand{\optional}[1]{\textit{Option}(#1)}
\newcommand{\bool}{\textit{Bool}}
\newcommand{\types}[1]{\models_{#1}}

\newcommand{\paramEq}[2]{\cong_{#2}^{#1}}
\newcommand{\parameterTypes}[1]{\rho(#1)}
\newcommand{\streamTypes}[1]{\alpha(#1)}

\newcommand{\semEval}[1]{\llbracket #1 \rrbracket}
\newcommand{\srin}[0]{\idsI}
\newcommand{\srout}[0]{\idsO}
\newcommand{\sr}[0]{\srin \uplus \srout}

\newcommand{\StreamMap}[0]{\mathit{StreamMap}}

\newcommand{\instanceId}[0]{\mathit{InstanceID}}
\newcommand{\Stream}[0]{\mathit{Stream}}
\newcommand{\prefix}[0]{\mathit{Prefix}}
\newcommand{\stime}[0]{\mathit{Time}}
\newcommand{\svalue}[0]{\mathbb{V}_{\bot}}

\newcommand{\pacing}[3]{\llbracket#2\rrbracket^{#1}_{\world, #3}}
\newcommand{\pacingAC}[2]{\llbracket#2\rrbracket^{#1}_{\world}}

\newcommand{\world}[0]{\omega}
\newcommand{\World}[0]{\mathbb{W}}
\newcommand{\evalToAt}[4]{{#3}\mathbin{\Downarrow^{#1}_{\world,#2}}{#4}}
\newcommand{\evalToWith}[3]{\evalToAt{t}{#1}{#2}{#3}}
\newcommand{\evalTo}[2]{\evalToWith{\mathit{inst}}{#1}{#2}}
\newcommand{\lhs}[0]{\mathit{lhs}}
\newcommand{\rhs}[0]{\mathit{rhs}}
\newcommand{\val}[0]{\mathit{val}}
\newcommand{\expr}[0]{\mathit{expr}}
\newcommand{\worldaccone}[1]{\world(#1)}
\newcommand{\worldacctwo}[2]{\world(#1)(#2)}
\newcommand{\worldaccthree}[3]{\world(#1)(#2)(#3)}

\newcommand{\pac}[0]{\mathit{pac}}

\newcommand{\stimeMap}[0]{\mathit{TimeMap}}
\newcommand{\tstart}[0]{\mathit{start}_t}
\newcommand{\tend}[0]{\mathit{end}_t}
\newcommand{\instance}[0]{\mathit{inst}}
\newcommand{\instanceacc}[1]{\instance(#1)}
\newcommand{\alive}[0]{\mathit{AliveSince}}
\newcommand{\sid}[0]{\mathit{sid}}
\newcommand{\instid}[0]{\mathit{iid}}
\newcommand{\spawnPac}[1]{\mathit{SpawnPac}(#1)}
\newcommand{\spawnWhen}[1]{\mathit{SpawnWhen}(#1)}
\newcommand{\spawnWith}[1]{\mathit{SpawnWith}(#1)}
\newcommand{\evalPac}[1]{\mathit{EvalPac}(#1)}
\newcommand{\evalWhen}[1]{\mathit{EvalWhen}(#1)}
\newcommand{\closePac}[1]{\mathit{ClosePac}(#1)}
\newcommand{\closeWhen}[1]{\mathit{CloseWhen}(#1)}
\newcommand{\isSpawned}[0]{\mathit{IsSpawned}}
\newcommand{\isClosed}[0]{\mathit{IsClosed}}
\newcommand{\window}[0]{\mathit{Window}}

\renewcommand{\lstinline}[1]{\texttt{#1}}
\begin{document}
%
\title{Type-safe Monitoring of Parameterized Streams}

%
%
\author{Jan Baumeister\inst{1}\orcidlink{0000-0002-8891-7483} \and
  Bernd Finkbeiner\inst{1}\orcidlink{0000-0002-4280-8441} \and
  Florian Kohn\inst{1}\orcidlink{0000-0001-9672-2398}}
\authorrunning{Baumeister, Finkbeiner, and Kohn}
\institute{}
%
\institute{CISPA Helmholtz Center for Information Security\newline66123 Saarbrücken, Germany\newline
  \email{\{jan.baumeister, finkbeiner, florian.kohn\}@cispa.de}}

\maketitle              
\begin{abstract}
  Stream-based monitoring is a real-time safety assurance mechanism for complex cyber-physical systems such as unmanned aerial vehicles.
  The monitor aggregates streams of input data from sensors and other sources to give real-time statistics and assessments of the system's health.
  Since the monitor is a safety-critical component, it is mandatory to ensure the absence of runtime errors in the monitor.
  Providing such guarantees is particularly challenging when the monitor must handle unbounded data domains, like an unlimited number of airspace participants, requiring the use of dynamic data structures.
  This paper provides a type-safe integration of parameterized streams into the stream-based monitoring framework RTLola.
  Parameterized streams generalize individual streams to sets of an unbounded number of stream instances and provide a systematic mechanism for memory management.
  We show that the absence of runtime errors is, in general, undecidable but can be effectively ensured with a refinement type system that guarantees all memory references are either successful or backed by a default value.
  We report on the performance of the type analysis on example specifications from a range of benchmarks, including specifications from the monitoring of autonomous aircraft.


\end{abstract}
\section{Introduction}
\label{sec:intro}
Stream-based monitoring is a runtime verification approach for complex cyber-physical systems (CPS) such as unmanned aerial vehicles.
Unlike monitoring approaches using logics like LTL~\cite{bauer2011runtime}, stream-based languages can reason over rich data such as integers or strings.
A runtime monitor, or monitor, collects data through input streams, which are then combined and aggregated in output streams to give real-time assessments of the system's health.
Before generating a monitor from the specification, the specification is statically analyzed to establish key properties like a static memory bound or the absence of runtime errors.
%
%

This approach has already been applied successfully in real-world applications.
One prominent example is the stream-based specification language \rtlola that has been applied in monitoring unmanned aircrafts~\cite{volostream} or monitoring the emission of cars \cite{DBLP:conf/tacas/BiewerFHKSS21}.
As systems grow more complex, monitors must handle unbounded data domains.
For example, when monitoring a UAV's airspace, there can be no limit on the number of nearby vehicles.
A simple solution is to extend stream data types to include structures like Maps or Lists.
However, the specifier must ensure values are present in these structures, or the monitor could fail at runtime.
%
%
%
%
%
%

A competing approach to handle unbounded domains directly in the specification are parameterized streams as introduced by Faymonville et.al.~\cite{Lola2.0}.
Parameterized streams generalize streams to unbounded sets of instances, which can be created and closed at runtime to manage memory.
When stream instances are created, evaluated, and closed is expressed in the logic, allowing to statically guarantee the absence of runtime errors stemming from invalid memory accesses.
Yet, current approaches cannot express asynchronous or real-time properties, which are essential in monitoring cyber-physical systems.
%
%
%

This paper introduces parameterized streams to the real-time stream-based specification language \rtlola, addressing the challenge of dynamically creating stream instances at runtime.
In earlier versions of \rtlola, the existence of a stream's value could be determined statically using annotated types, such as frequencies or events.
However, with the introduction of parameterized streams, the existence of a stream's value depends on runtime data.
Consequently, previous static analysis methods are no longer adequate.

We propose a new refinement type system for \rtlola that argues over the time points in which streams evaluate.
There, statically known time points defined by, for example, a frequency annotation, can be refined using boolean stream expressions that depend on runtime values.
We demonstrate that these types capture the dynamic behavior of parameterized streams in \rtlola and show that well-typed specifications can be monitored without runtime errors.

For this, we give a semantics for parameterized streams in \rtlola and show that guaranteeing the absence of runtime errors is undecidable.
We extend the definition of well-formedness to prevent nondeterministic behavior of the monitor.
We implement the approach in the \rtlola-framework \cite{volostream} using an over-approximation of the type system and evaluate its runtime using specifications from the aerospace domain and synthetic specifications to validate its scalability.

The rest of \Cref{sec:intro} demonstrates the benefits of parameterized streams and discusses related work.
In \Cref{sec:syntax}, we define the syntax before providing the semantics in \Cref{sec:semantics}.
\Cref{sec:types} defines the new type system and further explains its limitations.
\Cref{sec:welldefined} expands the definition of wellformedness, while \Cref{sec:evaluation} evaluates the performance of our type-checking approach and \Cref{sec:conclusion} concludes this paper.
%
%
%
%
%
%
%


\subsection{Motivating Example}
\label{sec:intro:example}
This example presents the usage of the real-time features of \rtlola together with the dynamic creation of stream instances to monitor the airspace around an autonomous drone. 
It is a simplified version of the example Baumeister et al. presented in \cite{DBLP:conf/fm/BaumeisterFKS24}.
The monitor, specified using the \rtlola specification language, tracks the distance to an a priori unbounded number of airspace participants.
It notifies the aircraft operator in case an airspace participant, called an intruder, approaches the aircraft and gets critically close.
%
	\begin{lstlisting}
input intruder_id: UInt
input distance: Float

output distance_per(id)
    spawn with intruder_id
    eval when id = intruder_id with distance
    close when id = intruder_id && distance > 10.0

output avg_distance(id)
    spawn with intruder_id
    eval @1Hz@ with distance_per(id).aggregate(over: 1s, using: avg).defaults(to: 0.0)  
    close when id = intruder_id && distance > 10.0

trigger(id)
    spawn when distance < 5.0 with intruder_id
    eval @1Hz@ when avg_distance(id).last(or: 0.0) > avg_distance(id) with "Intruder"
    close when id = intruder_id && distance > 10.0
\end{lstlisting}
We assume the aircraft is equipped with a sensor that measures the distance to an intruder.
The example includes two input streams, capturing the aircraft's observations.
The first input stream, \streamName{intruder\_id}, represents the intruder's identifier, and the second stream, \streamName{distance}, represents the distance to that intruder.

The first output stream, \streamName{distance\_per}, groups distance measurements by intruder using a single parameter, \streamName{id}, indicated by parentheses after the stream name. 
The \streamKeyword{spawn} clause assigns this parameter using the \streamName{intruder\_id} input stream, creating one stream instance per intruder. 
Each instance {\color{bluekeywords}eval}uates to the current distance \streamKeyword{when} the intruder matches its id and is removed from memory when the distance exceeds a specified threshold.

To reduce sensor fluctuations, the \streamName{avg\_distance} stream computes the average distance at a 1Hz frequency.
The final trigger tests whether an intruder is approaching by comparing the current and previous values of the \streamName{avg\_distance} stream once an intruder is closer than a threshold.
Monitoring this example can actually result in runtime errors, as the trigger's periodicity and the \streamName{avg\_distance} stream may not be synchronized. 
This desynchronization can cause the trigger to access invalid or unavailable values of the \streamName{avg\_distance} stream, leading to memory errors. 
The following sections show how the proposed type system effectively detects these errors, ensuring safe and reliable monitoring.

\subsection{Related Work}
\label{sec:related}
Runtime verification has been applied to a wide range of systems and properties \cite{DBLP:conf/cav/JungesTS20,DBLP:conf/cav/HenzingerKKM23}.
Properties can, for example, be specified using temporal logics like LTL \cite{bauer2011runtime} or MTL \cite{DBLP:conf/cav/BasinKZ17}.
Stream-based monitoring approaches \cite{d2005lola,DBLP:conf/rv/KallwiesLSSTW22,DBLP:conf/fm/GorostiagaS21} use specification languages that handle non-binary data and are inspired by synchronous languages \cite{DBLP:journals/pieee/HalbwachsCRP91,DBLP:conf/concur/BerryC84}.
One way these languages can handle dynamic memory is by introducing new data types.
HLola~\cite{DBLP:conf/tacas/GorostiagaS21}, for example, embeds data types defined in Haskell in Lola\cite{d2005lola}.
With this approach, the responsibility of managing dynamic memory is given to the specifier.
An alternative to this approach are parametrized streams~\cite{Lola2.0}.
Dynamic memory is directly embedded in the specification language by describing stream templates.
This approach was later embedded into other stream-based languages such as HStriver~\cite{DBLP:conf/fm/GorostiagaS21} and even extended to retroactive dynamic parametrization~\cite{DBLP:journals/corr/abs-2307-06763}.
Here, a parametrized stream has different configurations to access the complete history of the monitor.
Recently, refinement types for synchronous languages have been studied~\cite{DBLP:journals/corr/abs-2406-06221} that establish guarantees about the values of streams.
None of these approaches provide a type system reasoning about the timing of parametrized streams.
Yet, such a type system has proven to be highly beneficial\cite{DBLP:conf/cav/FaymonvilleFSSS19,DBLP:conf/cav/BaumeisterFSST20,DBLP:journals/tecs/BaumeisterFST19}.
%
%
%
%
%
%
%
%
\section{RTLola Syntax}
\label{sec:syntax}
In this section, we describe the syntax of \rtlola specifications in their abstract syntax tree (AST) format.
However, for readability, the example specifications in this paper are written in a more readable syntax that can be parsed to the syntax presented in the paper.

An RTLola specification consists of multiple stream definitions.
We differentiate between two types of streams: 
Input streams refer to system status readings, and output streams perform computations based on their stream declaration.
\begin{definition}[Specification]
\label{def:syntax}
    An \rtlola specification is a set of input streams and a set of output streams.
    Input streams and output streams have a unique identifier used in stream expressions.
    Output streams define their range of parameters and three stream declarations: one defining the start of a stream instance, one describing the stream computation, and one describing the stream's closing behavior.
    The grammar for a specification is given below:\\
    \scalebox{0.92}{\parbox{\linewidth}{
	\bgroup
	\def\arraystretch{0.5}%
	\begin{bnfgrammar}
		spec : Specifications ::= ($\texttt{input}^*$, $\texttt{output}^*$)
		;;
		input : Input streams ::= \sInput{id\_in}
		;;
		output : Output streams ::= \sOutput{id\_out}{\#para}{spawn}{eval}{close}
		;;
		\#para : Number of parameters \in \mathbb{N}
		;;
		spawn : Spawn declarations ::= \sSpawn{pac}{expr}{expr}
		;;
		eval : Eval declarations ::= \sEval{pac}{expr}{expr}
		;;
		close : Close declarations ::= \sClose{pac}{expr}
		;;
		pac : Pacing types ::= \syn{Any} $\mid$ \eventAc{ac} $\mid$ \globalFreq{freq} $\mid$ \localFreq{freq}
		;;
		freq : Frequencies \in \mathbb{N}
		;;
		ac : Activation conditions ::= id\_in $\mid$ ac \syn{and} ac $\mid$ ac \syn{or} ac
	\end{bnfgrammar}
	\egroup
	}}
\end{definition}
Parameterized output streams extend regular output streams with parameters, similar to functions. 
Each stream defines multiple instances, where each instance corresponds to a unique set of parameter values.
Output streams are specified through three declarations that define an instance's lifecycle and computation.
Each includes a pacing type and a semantic condition refining the timing.
The \emph{spawn} declaration specifies when a new instance is created and sets its parameters, the \emph{eval} declaration defines when and how the instance computes its value, and the \emph{close} declaration determines when the instance is removed.
Output streams without parameters are treated as parameterized streams with default values for their spawn, eval, and close clauses.

We refer to the first expression of each declaration as the \emph{When}-clause and the second, if present, as the \emph{With}-clause.
For the rest of the paper, we assume these expressions and the pacing of a declaration can be accessed via functions. For example, $\mathit{SpawnWith}(\sid)$ returns the With-clause of the spawn declaration of the stream identified by $\sid$.

A \textbf{pacing type} can either be a fixed global frequency, a fixed local frequency, or an event where an event is a combination of input streams receiving a new value simultaneously.
The $\pacingAny$ type is assumed if no pacing type is given.
\begin{definition}[Pacing Types]
    A pacing type defines the time points when an expression is evaluated.
	Their syntax is defined in \Cref{def:syntax}.    
\label{def:pacingTy}
\end{definition}
%
\textbf{Stream expressions} algorithmically define how a value is computed from other stream values, constants, and functions.
Their syntax is defined below, and their full definition can be found in \Cref{rtlola:syn:expr:full}.
They feature different access operators to refer to stream values. 
Each access includes a stream identifier and an expression to compute the target instance. 
A synchronous access retrieves the current value, a hold access returns the last value, and an offset access refers to a stream's history at discrete points. 
Aggregations provide sliding windows over past values, defined by a function and a real-time duration. 
The default expression handles failed accesses, like offsets at the monitor's start. 
Arithmetic expressions, constants, and functions are handled by the function expression, while a parameter access refers to the current instance's parameter value.
\scalebox{0.92}{\parbox{\linewidth}{
\bgroup
\def\arraystretch{0.5}%
\begin{bnfgrammar}
	expr : Expressions ::= \syncAccess{id}{p} $\mid$ \holdAccess{id}{p}
	| \offsetAccess{id}{p}{off} $\mid$ \default{expr}{expr}
	| \aggr{id}{p}{dur}{f\_a}
	| \functionExp{f}{$\texttt{expr}^*$} $\mid$ \constant{const}
	| \parameterAccess{p\_idx} $\mid$ \tuple{expr*}
	;;
	const : Constants \in \mathbb{V}
	;;
	id : Stream identifiers ::= id\_in $\mid$ id\_out
	;;
	id\_in : Input Stream identifiers \in \idsI
	;;
	id\_out : Output Stream identifiers \in \idsO
	;;
	p : Parameters \in \texttt{expr} \times \dots \times \texttt{expr}
\end{bnfgrammar}
\egroup}}
\section{Relational Semantics}
\label{sec:semantics}
The semantic of an \rtlola specification is defined as a relation between the value in an evaluation model and the value computed by the stream declarations.
An \emph{Evaluation Model} $\world \in \World$ is the combination of a $\StreamMap$ and a $\stimeMap$.
\[
\begin{array}{r l}
	\Stream &:= \instanceId \rightarrow \stime \rightarrow \svalue\\
	\StreamMap &:= \sr \rightarrow \Stream\\
	\stimeMap & := \stime \rightarrow \mathbb{R}\\
	\World & := \StreamMap \times \stimeMap
\end{array}
\]
The $\StreamMap$ assigns each stream and instance identifier to an infinite timed sequence of optional values.
The $\stimeMap$ is a total function from a discrete timestamp to a real-time value later used to reason over real-time accesses.
%
We overload the access function to evaluation models for convenience based on its input.
Give an evaluation model $\world \in \World$, a stream access to a stream $\sid \in \sr$ and a  discrete timestamp $t \in \stime$ it holds:
	\[
	\begin{array}{r l l}
		\world(\sid) &:= \mathit{stream\_map}(\sid) &\mbox{if } \world = (\mathit{stream\_map}, \mathit{time\_map})\\ 
		\world(t) &:= \mathit{time\_map}(t) &\mbox{if } \world = (\mathit{stream\_map}, \mathit{time\_map})
	\end{array}
	\]
To validate an evaluation model, the semantics assert that the time is monotone and that each value in the model matches the computed value described by the stream declarations.
\begin{definition}[Specification Semantics]
	Let $\varphi$ be a \rtlola specification. The set of valid evaluation models for the specification is
	\[
		\llbracket \varphi \rrbracket := \left\{ \world \mid \forall \sid \in \srout. \world \vDash \varphi(\sid) \wedge \forall t \in \stime. \world(t) < \world(t + 1)\right\}.
	\]
	An evaluation model $\world$ satisfies an output stream iff
	\[
	\begin{array}{c}
		\world \vDash \sOutput{\sid}{p}{s}{e}{c} := \\ \forall t \in \stime, \instid \in \mathbb{V}^p. \evalTo{\sOutput{\sid}{p}{s}{e}{c}}{\world(\sid)(\instid)(t)}
	\end{array}
	\]
\end{definition}
For the evaluation of the stream declarations, a set of inference rules define $\evalTo{}{}$ computing their value given a model $\world$, a discrete timestamp $t$, and the current stream instance $\instid$.
These inference rules evaluate the pacing and use the prefix of a stream instance, defined in the following subsection.

\subsubsection{Pacing Evaluation.}
\label{sec:semantics:pacing}
The evaluation of a stream pacing distinguishes four alternatives:
\pacingAny~is a condition that is always true, the \emph{Activation Condition} bounds the pacing to the updates of input data, and the two frequency types (global and local) describe fixed points in real-time.
The difference between the two types of frequencies is their starting point.
A \emph{global frequencies} starts with the beginning of the monitor, whereas the start of a \emph{local frequencies} is bounded by the condition described with the \emph{Spawn} definition of a stream.

	Given a evaluation model $\world \in \World$, a discrete timestamp $t \in \stime$, and discrete starting time $\tstart \in \stime$, the evaluation $\pacing{t}{}{\tstart}$ of a pacing is
	\[
	\begin{array}{r l}
		\pacing{t}{}{\tstart} &: \mathit{Pacing~types}\rightarrow \mathbb{B}\\
		\pacing{t}{\syn{\pacingAny}}{\tstart} &:= \top\\
		\pacing{t}{\eventAc{\mathit{ac}}}{\tstart} &:= \pacingAC{t}{\mathit{ac}}\\
		\pacing{t}{\globalFreq{\mathit{freq}}}{\tstart} &:= \worldaccone{t} \mathbin{\texttt{mod}} \mathit{freq} = 0\\
		\pacing{t}{\localFreq{\mathit{freq}}}{\tstart} &:= (\worldaccone{t} - \worldaccone{\tstart}) \mathbin{\texttt{mod}} \mathit{freq} = 0,\\
	\end{array}
	\]
	where the evaluation of an \emph{activation condition} $\mathit{ac}$ is
	\[
	\begin{array}{r l}
		\pacingAC{t}{} &: \mathit{Activation~conditions}\rightarrow \mathbb{B}\\
		\pacingAC{t}{sr} & := \worldacctwo{sr}{t} \neq \bot\\
		\pacingAC{t}{\lhs~\syn{and}~\rhs} & := \pacingAC{t}{\lhs} \wedge \pacingAC{t}{\rhs}\\
		\pacingAC{t}{\lhs~\syn{or}~\rhs} & := \pacingAC{t}{\lhs} \vee \pacingAC{t}{\rhs}.
	\end{array}
	\]

\subsubsection{Prefix}
\label{sec:semantics:pacing}
The prefix of a stream instance is a finite sequence of stream values $\mathbb{V}$ up to a discrete timestamp $t$.
It allows expression evaluations to access past values or aggregate over the instance's full or partial history.
However, the prefix does not always start with the instance's first value, as the close declaration causes the monitor to remove past values from its memory.
The beginning is defined as the first time the instance is spawned and not closed until $t$.
We define $\isSpawned$ and $\isClosed$ predicates to determine when an instance starts and stops, with the full definition in \Cref{def:is_spawned}. 
%
%

Based on this, the \textit{AliveSince} predicate determines if a stream instance currently exists.
Given an evaluation model $\world \in \World$, a stream identification $\sid \in \sr$, an instance identification $\instid \in \instanceId$, and a timestamp $t \in \stime$, the first timestamp such that the stream instance is spawned and not closed until $t$ is described by:
	\[
	\begin{array}{l}
		\alive : \World \times \sr \times \instanceId \times \stime \rightarrow \stime_{\bot} \\
		\alive(\world, \sid, \instid, \tend)\\
		~~:= \left\{
			\begin{array}{lll}
				0 && \mbox{if~} \sid \in \idsI\\
				\tstart & \mbox{if } \tstart = \mathit{min}\{t \mid &\isSpawned(\world, \sid, \instid, t)\\
				&&\wedge~\forall t \leq t' < \tend. \lnot \isClosed(\world, \sid, \instid, t')\}\\
				\bot & \mbox{otherwise}
			\end{array}
			\right.
	\end{array}
	\]
A \textit{Prefix} then collects all non-optional values up to the start of the prefix or returns an empty sequence if the instance is not alive at time $t$.
\[
\begin{array}{l}
	\prefix : \World \times \sr \times \instanceId \times \stime \rightarrow (\stime, \mathbb{V})^* \\
	\prefix(\world , \sid, \instid, \tend)\\~:= \left\{\begin{array}{ll}
		\prefix'(\world(\sid)(\instid), \tstart, \tend) & \mbox{if~} \alive(\world, \sid, \instid, t) = \tstart 
		\\ & \wedge~\tstart \neq \bot\\
		\epsilon & \mbox{otherwise}
  \end{array}\right.
\end{array}
\]
where $\prefix'$ returns the sequence of non-optional values, from $\tstart$ to $\tend$:
	\[
\begin{array}{l}
	\prefix' : (\stime \rightarrow \svalue) \times \stime \times \stime \rightarrow (\stime, \mathbb{V})^*\\
	\prefix'(\instance, \tstart, \tend)\\~:= \left\{\begin{array}{ll}
    (\tend, v) & \mbox{if $\tend = \tstart \wedge \instanceacc{\tend} = v$}\\
    &~~~\wedge~v \in \mathbb{V}\\
    \epsilon & \mbox{if $\tend = \tstart \wedge \instanceacc{\tend} = \bot$}\\
    \prefix'(\instance, \tstart, \tend - 1) \cdot (\tend, v) & \mbox{if $\instanceacc{\tend} = v \wedge v \in \mathbb{V}$}\\
    \prefix'(\instance, \tstart, \tend - 1) &\mbox{if $\instanceacc{\tend} = \bot$}
    \end{array}\right.
\end{array}
\]
For \textbf{window aggregations} of a real-time duration, we then restrict the prefix of a stream instance to a window slice, starting from a real-time timestamp.
Given a stream instance prefix $\mathit{prefix}$, and a real-time timestamp $\tstart \in \mathbb{R}$, the window of a stream instance is defined as:
\[
	\begin{array}{l}
		\window: \World \times (\stime, \mathbb{V})^* \rightarrow \mathbb{R} \rightarrow (\stime, \mathbb{V})^*\\
		\window(\world, \mathit{prefix}, \tstart)\\:=\left\{
			\begin{array}{ll}
			\window(\world, \mathit{prefix}', \tstart) \cdot (t,v) & \mbox{if~} \mathit{prefix} = \mathit{prefix}' \cdot (t,v) \wedge \world(t) > \tstart\\
			\epsilon & \mbox{otherwise}
			\end{array}\right.
	\end{array}
\]

\subsubsection{Expression Evaluation.}
We provide a set of inference rules defining the evaluation of stream expressions.
Given an evaluation model $\world$, a discrete timestamp $t$, and an instance context, i.e., under which instance the current stream expression is evaluated, the evaluation $\evalTo{}{}$ of a stream expression is defined by the set of inference rules defined in \Cref{fig:sema:expressions}.

Consider the three rules in \Cref{fig:sema:expression} as an example defining the evaluation of the \syn{Sync} and \syn{Hold} expression.
Both expressions refer to values of stream instances described by the $\sid$ and $\instid$.
\begin{figure}[t]
	\centering
	\scalebox{0.9}{
	\begin{mathpar}
	\inferrule[eval-expr-syn]{\evalTo{\mathit{expr}}{\mathit{inst'}}\\\worldaccthree{\sid}{\mathit{inst'}}{t} = v}{\evalTo{\syncAccess{\sid}{\mathit{expr}}}{v}}\qquad\\
\inferrule[eval-expr-hold]{\evalTo{\mathit{expr}}{\mathit{inst'}} \\ \mathit{prefix} = \prefix(\world,\sid, \mathit{inst'}, t) \\ \mathit{prefix} = \mathit{prefix}' \cdot (t',v)}{\evalTo{\holdAccess{\sid}{\mathit{expr}}}{v}}\qquad\\
        \inferrule[eval-expr-hold-dft]{\evalTo{\mathit{expr}}{\mathit{inst'}} \\ \mathit{prefix} = \prefix(\world,\sid, \mathit{inst'}, t)\\\mathit{prefix} = \epsilon}{\evalTo{\holdAccess{\sid}{\mathit{expr}}}{\bot}}\qquad\\	
\end{mathpar}}
\caption{Inference rule examples for stream expressions}
\label{fig:sema:expression}
\end{figure}
Intuitively, the synchronous expression accesses the current value in the evaluation model of a stream instance.
The corresponding inference rule first determines the concrete stream instance with the evaluation of the second argument before accessing the value of this instance in the evaluation model.
The hold access has two rules that both compute the prefix of the stream instance to determine if a last value exists.
If this prefix is not empty, the first inference rule evaluates the hold access to this last value.
Otherwise, the second rule that evaluates the expression to $\bot$ is applicable.
\subsubsection{Stream Instance Evaluation.}
Using the expression and pacing evaluation, a set of inference rules is given, evaluating an output stream instance.
This set of inference rules covers the different combinations of the evaluation pacing, the dynamic filters $\mathit{EvalWhen}$, and stream instances that might not be alive.
Generally, a stream instance evaluates to a value $v \in \mathbb{V}$, iff the stream instance is alive, and the evaluation pacing and the dynamic filter evaluate to true.
\begin{figure}[t]
\centering
	\scalebox{0.9}{
	\begin{mathpar}
    	\inferrule[eval-instance-value]{\alive(\world, \sid, \instid, t) = \tstart\\ \tstart \neq \bot \\\pacing{t}{\pac}{\tstart} \\ v \in \mathbb{V} \\ \evalTo{\mathit{when}}{true}\\\evalTo{\mathit{with}}{v}}{\evalTo{\sOutput{\sid}{\_}{\_}{\sEval{\pac}{\mathit{when}}{\mathit{with}}}{\_}}{v}}\qquad\\
    	\inferrule[eval-instance-false-eval-when]{\alive(\world, \sid, \instid, t) = \tstart\\ \tstart \neq \bot \\\pacing{t}{\pac}{\tstart} \\ v \in \mathbb{V} \\ \evalTo{\mathit{when}}{false}}{\evalTo{\sOutput{\sid}{\_}{\_}{\sEval{\pac}{\mathit{when}}{\mathit{with}}}{\_}}{\bot}}\qquad\\
	\end{mathpar}}
	\caption{Inference rules for the stream instance evaluation}
	\label{fig:sema:instances}
\end{figure}
Given an evaluation model $\world$, a discrete timestamp $t$, and an instance context, the evaluation $\evalTo{}{\mathit{inst}}$ of a stream instance is defined by the inference rules in \Cref{fig:sema:instances:complete} partially shown in \Cref{fig:sema:instances}.
\section{The RTLola Typesystem}
\label{sec:types}
The semantics described in \Cref{sec:semantics} handles stream accesses fallibly, meaning they can result in undefined values.
For some accesses, like $\holdAccess{\sid}{\expr}$ or $\offsetAccess{\sid}{\expr}{o}$, this is expected, and the specifier must provide a default value.
However, a synchronous access explicitly requires that the accessed value exists, and thus, it should never result in an undefined value.
In the next section, we present \rtlola's refinement type system, which ensures this condition is met.

Consider the following example:
\begin{lstlisting}
input a
output b @1Hz@ := 42
output c @a@ := b
\end{lstlisting}
In this example, the stream \lstinline{c} is evaluated whenever \lstinline{a} gets a new value.
Yet, it is not guaranteed that the stream \lstinline{b} is evaluated simultaneously so that its value can be accessed by \lstinline{c}.
This is only the case when the input stream \lstinline{a} receives values in one-second intervals.
In practice, however, inputs are controlled by the system, and the monitor should not assume anything about their value or timing.

Similar to programming languages, \rtlola features a value-type analysis, ensuring functions are called with appropriate values.
This paper assumes all specifications are well-typed based on this analysis.
For completeness, the definition of the value type system is included in \Cref{appendix:value_types}.
For an extended version of all type analyses, we refer to the dissertation of Maximilian Schwenger \cite{Schwenger_2022}.

This section proceeds as follows: \Cref{def:spec_safety} formalizes the condition that the semantics cannot fail.
From that, we build lemmas and theorems that relate well-typedness to semantic safety, i.e., every well-typed specification cannot fail at runtime.
Finally, we present the inference rules that define the type system.



\subsection{Type checking soundness}
We present a series of theorems establishing the soundness of the type-checking procedure.
The complete proofs of the theorems and lemmas are presented in \cref{app:proofs}.
First, we define specification safety.
\begin{definition}[Specification Safety]
\label{def:spec_safety}
	A specification is safe iff it's semantics cannot fail, i.e., there is at least one world that is consistent with the specification $\varphi$ for any sequence of inputs:
	\begin{center}
		\scalebox{0.94}{$
		\textit{safe}(\varphi) \Longleftrightarrow (\forall i \in \idsI \rightarrow \stime \rightarrow \svalue.\ \exists \world \in \World.\ \forall s \in \idsI.\ \world(s)(\epsilon) = i(s) \land \world \in \semEval{\varphi})
		$}
	\end{center}

\end{definition}
Next, we define an $\mathit{active}$ predicate that semantically defines the points in time when a stream instance evaluates to a defined value.
\begin{definition}[Stream-Activation]
	\label{def:stream-activation}
	A stream instance is declared active if it is alive, its semantic filter evaluates to true, and its pacing type holds:
	\begin{align*}
		\mathit{active}(\sid, t, \instance, \world) \Longleftrightarrow &\alive(\world, \sid, \instance, t) = \tstart\\ 
		&\land \tstart \neq \bot \land \evalTo{\evalWhen{\sid}}{true}\\
		&\land \pacing{t}{\evalPac{\sid}}{\tstart}
	\end{align*}
\end{definition}
The next lemma explores the criterion under which the semantic can fail.
\begin{lemma}[Failable Semantics]
\label{lemma:failable_semantics}
The semantics for a specification fails iff there is a synchronous access that produces $\bot$.
\end{lemma}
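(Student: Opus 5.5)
I will prove both directions by analysing the inference rules for stream instance evaluation (\Cref{fig:sema:instances:complete}) together with the expression rules (\Cref{fig:sema:expressions}). Under \Cref{def:spec_safety}, ``the semantics fails'' means that for some input sequence $i$ there is no $\world$ with $\world \in \semEval{\varphi}$ that agrees with $i$ on the input streams. So I must show the following. No consistent world exists for some $i$ iff, for some $i$, every candidate world forces the evaluation of some $\syncAccess{\sid}{\expr}$ in a reachable context (a When- or With-clause of an alive instance whose pacing holds) to yield $\bot$.

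\textbf{Step 1: the only stuck point is the synchronous access.} I will check, rule by rule, which evaluations always have an applicable rule. Every expression form other than $\syn{Sync}$ is total on any $\world$ once its subexpressions have values:
\begin{itemize}
\item $\syn{Hold}$, $\syn{Offset}$ and $\syn{Aggregate}$ have explicit $\bot$-producing fallback rules (as \textsc{eval-expr-hold-dft} illustrates);
\item $\syn{Default}$ replaces $\bot$ by its second argument;
\item functions, constants, parameters and tuples are total by the well-typedness of value types (\Cref{appendix:value_types}).
\end{itemize}
The rule \textsc{eval-expr-syn} also always applies, since $\world(\sid)(\mathit{inst'})(t)$ is defined. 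However, it may return $\bot$. At the instance level, the rule \textsc{eval-instance-value} requires $v \in \mathbb{V}$. Hence if the With-clause (or a When-clause that must yield a boolean) evaluates to $\bot$ while the instance is active, no instance rule fires. The value $\world(\sid)(\instid)(t)$ then cannot be related, so $\world \notin \semEval{\varphi}$. I will argue that a $\bot$ reaching the top of a When/With-clause can only originate from a $\syn{Sync}$: the other $\bot$-sources are intended to be wrapped by $\syn{Default}$, and otherwise the value typing (which forbids optional values where a plain value is required) rejects them.

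\textbf{Step 2 ($\Leftarrow$).} Suppose that for some input $i$ every world agreeing with $i$ has, at some $t$ and instance, a synchronous access evaluating to $\bot$ in a clause that must produce a value. By Step 1, no instance rule applies there, so no such world lies in $\semEval{\varphi}$. Therefore $\textit{safe}(\varphi)$ fails.

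\textbf{Step 3 ($\Rightarrow$), the main obstacle.} Assume no synchronous access produces $\bot$. I must construct a consistent world for every input. I will build $\world$ by well-founded induction on $(t, \text{stream})$, where streams at a fixed time are ordered by the same-time dependency order guaranteed by well-formedness (\Cref{sec:welldefined}, acyclic synchronous dependencies). Past accesses only look at earlier timestamps, which are already fixed. At each step:
\begin{itemize}
\item compute $\alive$ and the pacing;
\item evaluate the When-clause and, if needed, the With-clause;
\item set the value accordingly: $\bot$ if inactive, otherwise the computed value, which is in $\mathbb{V}$ by assumption and Step 1.
\end{itemize}
The delicate point is that $\alive$ depends on $\isSpawned$ and $\isClosed$ at the current time, and these in turn evaluate spawn and close expressions. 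I will include these clause evaluations in the same dependency order and appeal to well-formedness to exclude circularity. Finally, monotonicity of the $\stimeMap$ is inherited from the input, and the constructed world satisfies every output stream by construction.
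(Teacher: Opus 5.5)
\textbf{Where your proposal matches the paper, and where it breaks.} Your Step~1 reproduces the paper's entire proof. That proof is a structural pass showing that, given the value typing, only \textsc{eval-expr-syn} can yield an unguarded $\bot$. Your instance-rule observation makes explicit what the paper dismisses as ``easy to see.'' Restricting to eval clauses of alive, paced instances is also sharper than the paper's statement, since a $\bot$ in a spawn or close clause merely falsifies $\isSpawned$ or $\isClosed$.

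The gap is in how Steps~2 and~3 fit together. You formalise the right-hand side as ``for some $i$, \emph{every} world agreeing with $i$ has a reachable $\syncAccess{\sid}{\expr}$ evaluating to $\bot$.'' This hypothesis fails whenever the offending access targets an output stream: a world that assigns an arbitrary value of $\mathbb{V}$ to every output at every time never reads $\bot$ from one. For example, \texttt{input x; input y; output b @y := y; output c @x := b} has no model for an input where \texttt{x} arrives without \texttt{y}. Yet the world assigning $0$ to every output has no reachable $\bot$-access. So $(\Rightarrow)$ as you stated it is false. In Step~3 this surfaces at ``in $\mathbb{V}$ by assumption'': negating your Step-2 hypothesis only yields \emph{some}, possibly inconsistent, world without a $\bot$-access, not the world you are constructing.

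The missing idea is to tie the predicate to the canonical world built by your Step-3 induction, run until the first active eval clause whose synchronous access reads $\bot$. Then prove $(\Leftarrow)$ by showing that every model agrees with this canonical partial world on everything preceding that point in the well-founded order. This is the determinism argument behind ``well-formed implies well-defined.'' Such a model then reads the same $\bot$, and no instance rule applies.

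\textbf{Further points on Step~3.} First, well-formedness is neither a hypothesis of the lemma nor used by the paper, so you must state it explicitly. It is genuinely needed: \texttt{input x; output a @x := !a} passes the type rules, because a self-access only imposes reflexive constraints. It has no model once \texttt{x} arrives, yet the world with $a \equiv \mathit{true}$ never reads $\bot$, and no canonical world exists.

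Second, your ordering argument needs repair.
\begin{itemize}
\item By the definition of $\alive$, only $\isClosed$ at times $t' < \tend$ matters, so close clauses at time $t$ are not a same-time dependency. That, not well-formedness, is why they cause no circularity; the well-formedness definition explicitly permits cycles through Close.
\item Well-formedness also permits all-EvalWith cycles through a nonzero offset. Since $\prefix$ at $t$ contains the entry at $t$, such an offset still depends on whether its target is active at $t$. So a per-stream order does not suffice: you must compute all activations at $t$ (alive, pacing, When) before any With-value. The all-EvalWith condition is exactly what makes this two-phase order well-founded.
\item You must read $\alive$ with the implicit constraint $t \le \tend$. As literally written, aliveness depends on future spawns, and induction on $t$ breaks.
\end{itemize}
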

The proof of this lemma is a structural induction over the syntax of \rtlola.
We determine that the \emph{Eval-Expr-Syn} inference rule is the only one that can produce $\bot$ during the induction.
Based on this lemma, abstract safety captures the criterion under which a synchronous accesses cannot fail.
\begin{theorem}[Abstract-Safety]
	\label{theorem:abstract-safety}
	Given a specification $\varphi$.
	Let $\syncAccess{\sid}{expr}$ be a synchronous access in the stream $\sOutput{\sid'}{n_p}{s}{e}{c}$.
	\begin{align*}
		(\forall t \in \stime, \world \in \World, \instance' \in \svalue^{n_p}.&\\
		\mathit{active}(\sid',t, \instance', \world) &\land \evalToAt{t}{\instance'}{expr}{\instance} \implies \mathit{active}(\sid, t, \instance, \world))\\
		&\implies \textit{safe}(\varphi)
	\end{align*}
\end{theorem}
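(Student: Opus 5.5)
We reduce \Cref{theorem:abstract-safety} to \Cref{lemma:failable_semantics}. Assume the premise holds for every synchronous access. By \Cref{lemma:failable_semantics}, the semantics of $\varphi$ can only fail through an application of \textsc{Eval-Expr-Syn} that yields $\bot$. It therefore suffices to show two things. First, for every input sequence $i$ there is a candidate model $\world$ agreeing with $i$ on $\idsI$. Second, in this model every synchronous access that is actually evaluated returns a value in $\mathbb{V}$. Together these give $\world \in \semEval{\varphi}$ and hence $\textit{safe}(\varphi)$.

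\textbf{Step 1: constructing $\world$.} Fix $i$ and a strictly monotone time map. We define $\world$ by induction on the discrete time $t$. At each $t$, the values $\worldaccthree{\sid}{\instance}{t}$ for all output streams and instances are computed by applying the instance evaluation rules of \Cref{fig:sema:instances:complete}. The result is $\bot$ whenever the instance is not active, and the value of the With-clause otherwise. Within a single timestamp, synchronous accesses form dependencies between streams. We order the computation along these dependencies, relying on the well-formedness assumption (acyclicity of synchronous dependencies, \Cref{sec:welldefined}) so that the order is well-founded. Hold, offset and aggregate accesses only refer to prefixes at times $\le t$, or to strictly earlier values. $\alive$ and $\prefix$ depend only on spawn/close information, which is determined by the same ordered evaluation. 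Hence each clause is evaluated only after everything it reads has been fixed.

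\textbf{Step 2: synchronous accesses succeed.} Consider an evaluation of the With-clause (or When-clause) of $\sid'$ at time $t$ for instance $\instance'$ that reaches $\syncAccess{\sid}{expr}$ with $\evalToAt{t}{\instance'}{expr}{\instance}$. The With-clause is only evaluated when the instance is alive, its pacing holds and its When-clause is true, that is, when $\mathit{active}(\sid',t,\instance',\world)$ holds. The premise then gives $\mathit{active}(\sid,t,\instance,\world)$. By construction, $\worldaccthree{\sid}{\instance}{t}$ was then assigned the evaluated With-value of $\sid$. By the induction along the dependency order, this value is in $\mathbb{V}$ and not $\bot$. So \textsc{Eval-Expr-Syn} yields a defined value, and by \Cref{lemma:failable_semantics} no rule fails. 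Every output stream instance therefore satisfies its declaration, and $\world \in \semEval{\varphi}$.

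\textbf{Main obstacle.} The delicate point is synchronous accesses that occur in When-clauses, or in spawn/close clauses. These are evaluated before the stream is known to be active, so the premise cannot be instantiated directly with $\mathit{active}(\sid',\dots)$. I would first try to strengthen the induction hypothesis: define active-ness of each clause context separately and argue that the premise, read with $\mathit{active}$ as the guard of the enclosing evaluation, covers these cases. A second point also needs care: the inductive construction must terminate and be unique at each timestamp, which again rests on well-formedness. If that turns out to be insufficient, I would restrict the statement to accesses in With-clauses and treat filter accesses by the same argument with the pacing-only guard.
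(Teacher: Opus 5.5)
\paragraph{The gap: synchronous accesses in eval filters.} The gap is the one you flag yourself, and neither fallback closes it. Both change the premise, and no argument from the premise as written can succeed.

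In the paper's semantics, synchronous accesses in spawn and close clauses are actually harmless for safety. These clauses enter only through the propositions $\isSpawned$ and $\isClosed$, where a $\bot$ merely falsifies a judgement $e \Downarrow \mathit{true}$ and blocks no rule. Accesses in the eval filter are different. The filter is evaluated whenever the instance is alive and its eval pacing holds, and if it yields $\bot$, none of the four instance rules applies. But the premise is guarded by $\mathit{active}(\sid',t,\instance',\world)$, i.e.\ by the filter already being true, so it says nothing at exactly those times.

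Concretely, take \texttt{input x: Bool}, \texttt{output b eval @x when x with true} and \texttt{output c eval @x when b \&\& x with 1}, with \texttt{\&\&} strict in $\bot$. In every $\world$, activity of \texttt{c} forces $\texttt{x}=\mathit{true}$ and hence activity of \texttt{b}, so the premise holds (the accesses to \texttt{x} are covered by the pacing \texttt{@x}). Yet an input with $\texttt{x}=\mathit{false}$ forces \texttt{b} to $\bot$ and thus the filter of \texttt{c} to $\bot$, so no model exists.

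The paper's contradiction argument has the same blind spot. It applies the premise to ``the synchronous access that produced $\bot$'' without establishing that the accessing instance was active at that moment, which is automatic only inside With-clauses. What your route does prove is one of two restricted statements:
\begin{itemize}
\item the statement for specifications without synchronous accesses in eval filters; or
\item essentially your second fallback: the variant in which the premise for a filter access is guarded by $\alive(\world,\sid',\instance',t)\neq\bot$ and the eval pacing instead of by $\mathit{active}$.
\end{itemize}
Either of these is a different theorem from the one stated.

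\paragraph{Comparison with the paper, and the well-formedness assumption.} Apart from this, your Step 2 is the paper's argument: by \Cref{lemma:failable_semantics} only \textsc{Eval-Expr-Syn} can yield $\bot$, the premise makes the accessed instance active, and an active instance has a value. The paper runs this as a short contradiction without ever constructing $\world$.

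Your Step 1 is a real addition. $\mathit{active}$ constrains only aliveness, filter and pacing, not $\world(\sid)(\instance)(t)\neq\bot$. So the paper's ``active and produced a value'' silently needs exactly your induction along same-timestamp dependencies.

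That induction, however, rests on well-formedness, which is not a hypothesis of the theorem. State it as an added assumption rather than cite it. It cannot be dropped: for \texttt{output a := !a} the premise holds trivially, yet no model exists.

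Two further points about the order you build:
\begin{itemize}
\item \Cref{def:wellformed} does not make synchronous dependencies acyclic. It permits cycles through a Close location; you can break these by evaluating close clauses last at each timestamp, which is sound since $\isClosed$ at $t$ affects $\alive$ only from $t+1$ on. It also permits cycles of With-edges containing a nonzero offset, which need separate care.
\item Hold, aggregation and zero-offset accesses read the value at time $t$, so their edges must be part of your order as well.
\end{itemize}
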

This theorem follows by contradiction of the definition of the $\mathit{active}$ predicate with the \emph{Eval-Expr-Syn} inference rule using \Cref{lemma:failable_semantics}.
In the final step, we derive type-checked safety as a sufficient criterion for abstract safety.
\begin{theorem}[Typechecked-Safety]
\label{theorem:typechecked-safety}
	Let the predicate $\mathit{types}(\varphi)$ represent that every stream in the specification $\varphi$ typed correctly according to the inference rules present below.
	Then for any synchronous access $\syncAccess{\sid}{expr}$ within the stream $\sOutput{\sid'}{n_p}{s}{e}{c}$ of $\varphi$ it holds that:
	\begin{align*}
		\mathit{types}(\varphi) &\implies \forall t \in \stime, \world \in \World, \instance' \in \svalue^{n_p}.\\
		&\mathit{active}(\sid',t, \instance', \world) \land \evalToAt{t}{\instance'}{expr}{\instance}\implies \mathit{active}(\sid, t, \instance, \world)
	\end{align*}
\end{theorem}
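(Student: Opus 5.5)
We have not yet seen the typing rules, so the plan assumes their expected shape: a refinement type \emph{pacing type $\times$ semantic filter} for eval, and analogous types for spawn and close, with a subtyping relation $\ptCon$ (and $\stCon$ for the semantic part) required at every synchronous access. The proof is then a soundness argument relating these types to the semantic predicate $\mathit{active}$.

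\textbf{Step 1: semantic interpretation of types.} For each stream $\sid$ we define the set of triples $(t, \instance, \world)$ at which its inferred type ``holds''. The pacing component holds when $\pacing{t}{\evalPac{\sid}}{\tstart}$ is true. The filter component holds when $\evalTo{\evalWhen{\sid}}{true}$. The spawn/close component holds when $\alive(\world,\sid,\instance,t)\neq\bot$. By construction, this interpretation of the type assigned to $\sid$ coincides with $\mathit{active}(\sid,t,\instance,\world)$ (\Cref{def:stream-activation}). So it suffices to prove that the typing of the access forces the interpretation of the accessing context to be included in the interpretation of the type of $\sid$.

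\textbf{Step 2: soundness of subtyping.} By structural induction on the derivation of $\ptCon$ and $\stCon$, we show that the subtype relation implies semantic inclusion:
\begin{itemize}
\item activation conditions: an implication between conjunctions and disjunctions of input identifiers is checked pointwise via $\pacingAC{t}{\cdot}$;
\item a global frequency that divides another global frequency;
\item $\pacingAny$ as top;
\item filters: syntactic implication of boolean expressions (e.g.\ conjunct inclusion) is sound under $\evalTo{}{}$.
\end{itemize}

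\textbf{Step 3: instantiation through the parameter expression.} This is the main obstacle. The accessed instance $\instance$ arises from $\evalToAt{t}{\instance'}{expr}{\instance}$. The types of $\sid$ mention $\sid$'s own parameters, for instance $\mathit{id} = \mathit{intruder\_id}$, while the context mentions $\sid'$'s parameters. The typing rule must substitute $expr$ for the parameters of $\sid$ (using $\paramEq{}{}$-style parameter equivalence). We then need a substitution lemma: evaluating an expression $e[\mathit{expr}/\parameterAccess{i}]$ under $\instance'$ gives the same result as evaluating $e$ under $\instance$.

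Aliveness brings an extra difficulty, because $\alive$ quantifies over past time points. We would handle it with an inductive invariant over $t$. Suppose the typing rule requires that the spawn condition of $\sid$ (after substitution) is implied by the spawn condition of $\sid'$, and that the close condition of $\sid$ implies the close condition of $\sid'$. Then whenever $\sid'$ is spawned and not closed since $\tstart'$, the same holds for $\sid$ from some $\tstart\le\tstart'$. Local frequencies require $\tstart=\tstart'$, or equality of spawn types, so that $\pacing{t}{\cdot}{\tstart}$ agrees.

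\textbf{Step 4: assembly.} Fix $t$, $\world$, $\instance'$ with $\mathit{active}(\sid',t,\instance',\world)$ and $\evalToAt{t}{\instance'}{expr}{\instance}$. From Step 3 we obtain aliveness of $\instance$, with a compatible start time. From Step 2 applied to the pacing we obtain $\pacing{t}{\evalPac{\sid}}{\tstart}$. From Step 2 combined with the substitution lemma we obtain the eval filter of $\sid$. Together these yield $\mathit{active}(\sid,t,\instance,\world)$.

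If the actual typing rules are stated only as an over-approximation, for example by conjunct inclusion, the same argument still applies. Each rule is checked once against its semantic reading, and the induction over the typing derivation handles composite expressions.
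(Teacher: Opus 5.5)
Your overall structure matches the paper's conjunct-by-conjunct argument. You read each component of the type triple semantically, chain soundness of $\ptCon$ and $\stCon$ through the premises of the output and synchronous-access rules, and get $\tstart\le\tstart'$ from spawn implication plus the contravariant close component. The gap is in Step 3, in how you establish aliveness of the \emph{specific} instance $\instance$.

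Your substitution lemma holds at a single time point only. But $\alive(\world,\sid,\instance,t)$ depends on the spawn event at $\tstart'$ and on the close conditions at every $t''\in[\tstart',t)$. At those times $expr$ may evaluate to something other than $\instance$, which is its value at $t$. So the criterion ``the spawn condition of $\sid'$ implies the spawn condition of $\sid$ with $expr$ substituted'' is not sufficient, and no invariant over $t$ can repair it.

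Here is a counterexample. Let $\sid$ be spawned when $c$ with an input $x$. Let $\sid'$ be unparameterized, with the same spawn pacing and condition. Let both evaluate whenever $x$ arrives, and let $\sid'$ access $\sid$ with argument $x$. The substituted condition $c\land x=x$ is implied by $c$. Yet once $\sid'$ is alive, a fresh value of $x$ arriving while the spawn condition fails names an instance of $\sid$ that was never spawned, and the access fails.

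You cite $\paramEq{}{}$ but never use what it provides, and that is the missing idea, enforced by \emph{Sem-Sync-Out}. Every argument must be a bare $\parameterAccess{i}$ of the accessing stream with $k \paramEq{\sid}{\sid'} i$. This means the $k$-th component of $\mathit{SpawnWith}(\sid)$ is syntactically equal to the $i$-th component of $\mathit{SpawnWith}(\sid')$, for every parameter $k$ of $\sid$.

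Parameter values are fixed for the lifetime of $\instance'$, so $\instance$ is a fixed projection of $\instance'$. At $\tstart'$, $\mathit{SpawnWith}(\sid')$ evaluated to $\instance'$, so $\mathit{SpawnWith}(\sid)$ evaluates to exactly $\instance$. Together with the spawn-when and spawn-pacing implications, this gives $\isSpawned$ for $\instance$ at $\tstart'$.

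Under this restriction your substitution becomes a time-invariant renaming. It can then legitimately be applied pointwise at each $t''$ to the close and eval-when conditions. That part is genuinely useful, since the paper compares such expressions across streams without saying how parameters correspond.

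A smaller caveat concerns local frequencies: equality of spawn types does not by itself give $\tstart=\tstart'$. When $\sid$ has fewer parameters than $\sid'$, the instance $\instance$ may already be alive because of an earlier sibling instance of $\sid'$ that agrees on the shared components. The paper's argument is silent on this as well.
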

The theorem follows directly from the type inference rules for the synchronous access and the output stream.

%
%
\subsubsection*{The Pacing Type System.}
The \rtlola pacing type system is a refinement type system that reasons about the time points at which a stream instance evaluates.
Instead of representing traditional data types like integers or strings, a pacing type defines a set of discrete real-time time points. 
For example, the pacing type $@\texttt{Global}(1Hz)$ represents whole seconds.
Semantic types are boolean stream expressions that refine this set based on their evaluation:
a time point stays in the set, iff the expression evaluates to \texttt{true} at that time.
Thus, the pacing type $@\texttt{Global}(1Hz)\texttt{ when }x > 5$ represents the set of whole seconds when the stream x evaluates to a value greater than $5$.
To simplify the reasoning, we split the pacing type and semantic type into separate lattices and inference rules.
These lattices are bounded meet-semilattices~\cite{SWB-21309293X} and are defined for an arbitrary but fixed specification.
In this subsection, these lattices are defined through their $\sqsubseteq$ "more concrete" relation on the set of types.

The pacing-type lattice has two components: event-based and periodic types.
Event-based types represent events over input streams as positive boolean formulas, with propositions evaluating to true when an input stream updates; their order is defined by boolean implication.
Periodic types represent fixed frequencies, ordered by the greatest common divisor of their periods, and are classified into local and global frequencies.
Global frequencies begin at the monitor's start, while local frequencies start when the instance is created.
Consider the example:
\begin{lstlisting}[label={ex:global_local_f}]
	input a: Int
	output b @1Hz@ := 5
	output c spawn when a > 42 eval @1Hz@ with b
\end{lstlisting}
In this specification, the output stream \lstinline{b} produces a constant value in periodic intervals aligned to the start of the monitor.
The output \lstinline{c} also produces values at this rate, but this it starts after the condition \lstinline{a > 42} evaluates to $true$.
As a result, the stream \lstinline{b} has the pacing type $\gf{1}$ while stream \lstinline{c} has $\lf{1}$.
These frequencies of \lstinline{b} and \lstinline{c} can be out of sync, and the access from \lstinline{c} to \lstinline{b} should be rejected by the type-checker.
%
While the pacing type lattice allows for coercing all local frequencies, the semantic type analysis performs an analysis of these local periodic timings based on their spawn declaration.
\\
\textbf{The Pacing Type Lattice.}
    The set of pacing types is defined as:
    \[
        \PT := \{\periodic, \gf{x}, \lf{x}, \eventT{e}, \top, \bot \mid x \in \mathbb{N}, e \in \ac \}    
    \]    
    \[
        \ac \ni \phi, \psi := \idsI \mid \phi \land \psi \mid \phi \lor \psi
    \]
    The relation $\ptCon$ is minimal with regard to the following equations:
    \begin{alignat*}{2}
        \pt &\ptCon \top &&\Longleftrightarrow \pt \in \PT\\
        \bot &\ptCon \pt &&\Longleftrightarrow \pt \in \PT\\
        \gf{x} &\ptCon \periodic &&\Longleftrightarrow x \in \mathbb{N}\\
        \lf{x} &\ptCon \periodic &&\Longleftrightarrow x \in \mathbb{N}\\
        \gf{x_1} &\ptCon \gf{x_2} &&\Longleftrightarrow \exists k \in \mathbb{N}. k \cdot x_2 = x_1\\
        \lf{x_1} &\ptCon \lf{x_2} &&\Longleftrightarrow \exists k \in \mathbb{N}. k \cdot x_2 = x_1\\
        \eventT{e_1} &\ptCon \eventT{e_2} &&\Longleftrightarrow e_2 \implies e_1\\
    \end{alignat*}
Note that while the set $\PT$ is generally infinite, it is finite for all types appearing in a single specification.
We associate a separate pacing type to each stream declaration, resulting in a triple of inferred types.
We refer to the spawn pacing using $\pt_s$, to the eval pacing with $\pt_e$, and to the close pacing with $\pt_c$.
We use $\ptt$ to refer to the triple of the above types for which the $\ptCon$ relation is defined as:
\[
	\ptt \ptCon \ptt' \Longleftrightarrow \pt_s \ptCon \pt_s' \land \pt_e \ptCon \pt_e' \land \pt_c' \ptCon \pt_c 
\]
All pacing type inference rules are depicted in \Cref{appendix:pacing_types}.
To give an intuition for these rules, we discuss the \emph{pacing-output} inference rule:
	\[
		\scalebox{0.9}{
		\inferrule[pacing-output]
		{
			(\top, \pt_e^s, \top) \types{} \sSpawn{\pt_s'}{e_s^f}{e_s}\\
			(\pt_s^e, \pt_e^e, \pt_c^e) \types{} \sEval{\pt_e'}{e_e^f}{e_e}\\
			(\pt_s^c, \pt_e^c, \pt_c^c) \types{} \sClose{\pt_c'}{e_c^f}\\
			\pt_s^c \in \{\top, \pt_e^s\}\\
			\pt_c^c \in \{\top, \pt_e^c\}\\
			\pt_e^e = \lf{x} \rightarrow \pt_e^s = \pt_s^e \land \pt_e^c = \pt_c^e \\
			\pt_e^c = \lf{x} \rightarrow \pt_e^s = \pt_s^c \land \pt_e^c = \pt_c^c\\
			\pt_e^s \ptCon \pt_s^e\\
			\pt_c^e \ptCon \pt_e^c\\
		}
		{(\pt_e^s, \pt_e^e, \pt_e^c) \types{} \sOutput{\sid}{p\_def}{\sSpawn{\pt_s'}{e_s^f}{e_s}}{\sEval{\pt_e'}{e_e^f}{e_e}}{\sClose{\pt_c'}{e_c^f}}}
	}\]
It is applied to output streams and first infers pacing types for the \emph{Spawn}, \emph{Eval}, and \emph{Close} declaration.
For each declaration, a spawn, eval, and close pacing is inferred, indicated by the subscript of the tuple elements.
These types are inferred from the \emph{when} and \emph{with} expressions of the declaration, while the eval pacing of a clause is further restricted to its annotated type.
The resulting pacing types are the eval pacings of the respective declaration.

For the spawn declaration, we demand that its spawn and close pacing is $\top$.
For the close declaration, the same conditions apply with one exception:
It can run at the same local frequency as the stream.
Hence, its spawn and close pacing is only permitted to be either $\top$ or equal to the pacing of the stream.
Lastly, the inference rule demands that the spawn and close pacing of the streams synchronously accessed in the eval declaration are compatible with the spawn and close pacings inferred for the output stream.
This condition is strengthened from compatibility to equality if the pacing of the eval declaration is a local frequency to prevent shifted periodicities, as shown in the example above.

Shifted periodicities also cause the example in \Cref{sec:intro:example} to be not well typed.
There, the shift is due to the added \textit{spawn when} annotation of the trigger.
\subsubsection*{The Semantic Type Lattice.}
Semantic types allow the refinement of pacing types depending on runtime information.
Defined as boolean stream expressions, their type lattice is given through the semantic implication of stream expressions.
The set of semantic types is defined as $\ST := \{ Expr(e), \top, \bot \mid e \in \exprs \}$ and
the relation $\stCon$ is minimal with regard to the following equations:
    \begin{alignat*}{2}
        \st &\stCon \top &&\Longleftrightarrow \st \in \ST\\
        \bot &\stCon \st &&\Longleftrightarrow \st \in \ST\\
        Expr(e_1) &\stCon Expr(e_2) &&\Longleftrightarrow \forall t \in \stime.\semEval{e_2}^t \rightarrow \semEval{e_1}^t
    \end{alignat*}
Where $\semEval{e}$ defines the evaluation of $e$ in its appropriate context according to the semantic presented in \cref{sec:semantics}.

As with pacing types, we simultaneously reason about the semantic type of the spawn, eval, and close declaration.
We use $\stt$ to refer to the triple of semantic types $\st_s$, $\st_e$ and $\st_c$ for which the $\stCon$ relation is defined as follows:
\[
	\stt \stCon \stt' \Longleftrightarrow \st_s \stCon \st_s' \land \st_e \stCon \st_e' \land \st_c' \stCon \st_c
\]
The inference rules for semantic type checking are depicted in \Cref{appendix:semantic_types}.
If necessary, the id of the stream containing the expression is annotated as a context to the $\types{}$ relation.
Below, we provide an intuition for semantic type inference using the \emph{Sem-Sync-Output} rule.
	\[
	\scalebox{0.9}{
		 \inferrule[Sem-Sync-Out]
    	{
    		\sOutput{\sid}{n_p}{s}{e}{c} \in \outputs\\
    	    \sOutput{\sid'}{n_p'}{s'}{e'}{c'} \in \outputs\\
            \forall 1 \leq k \leq n.\ e_k = \parameterAccess{i} \land k \paramEq{\sid}{\sid'} i\\
            \stt \types{} \sOutput{\sid}{n_p}{s}{e}{c}\\
            \stt'\stCon \stt
        }
    	{\stt' \types{\sid'} \syncAccess{\sid}{(e_1,...,e_n)}}
    }
	\]
This inference rule types a synchronous access from the output stream with id $\sid'$ to the output stream with id $\sid$.
It requires that the resulting type(s) of the access $\stt'$ is more concrete than the inferred type(s) of the accessed output stream.
The universally quantified statement poses an additional requirement for synchronous accesses, which we call parameter equality.
%

Ensuring that the timing of the accessing and accessed instance is compatible is, in fact, not sufficient.
Consider the following example:
\\
	\begin{minipage}{0.5\textwidth}
	\begin{lstlisting}
    input i1: Int64
    input i2: Int64
    output a(p1: Int64) spawn with i1 ...
    output b(p2: Int64, p3: Int64, p4: Int64)
        spawn with (i2, i1, i1)
        eval with
            a(p2) // Invalid
            a(p3) // Valid
            a(p4) // Valid
	\end{lstlisting}
	\end{minipage}%
	\begin{minipage}{0.45\textwidth}
	While the spawn and evaluation pacing type of stream \streamName{b} is compatible with the pacing of stream \streamName{a}, the access in line 7 is still invalid.
	The inputs \streamName{i1} and \streamName{i2} are not guaranteed to receive the same value from the system.
	\end{minipage}
	\\
Yet, the accesses in lines 8 and 9 are valid because the parameters \streamName{p3} and \streamName{p4} are both instantiated by \streamName{i1} in the streams spawn expression, which is also true for the parameter \streamName{p1} of the accessed stream, which guarantees that they always evaluate to the same value.

Intuitively, the pacing and semantic types guarantee that there exists any instance of accessed stream, while it does not necessarily ensure that the \textit{accessed} instance, identified by the expressions constituting the parameter values after the accessed stream name, exists.
For this, two conditions must be met:
First, all expressions that constitute the parameter values of the accessed stream are parameter accesses to parameters of the accessing stream.
Second, these \textbf{parameters are equal}, meaning they are instantiated by the same stream expression:
\\
\noindent Let $o^1 = \sOutput{\sid^1}{n}{\sSpawn{pt^1}{e_f^1}{(e_1^1, ..., e_n^1)}}{e^1}{c^1}$ be an output stream as well as $o^2 = \sOutput{\sid^2}{m}{\sSpawn{pt^2}{e_f^2}{(e_1^2, ..., e_m^2)}}{e^2}{c^2}$.
    Then:
    \[
         \forall 1 \leq i \leq n. \forall 1 \leq j \leq m.\ i \paramEq{\sid^1}{\sid^2} j \Longleftrightarrow e_i^1 =_s e_j^2
    \]
    Where $=_s$ refers to the syntactic equality of expressions.

For streams with compatible pacing, parameter equality statically guarantees that parameters of different streams have the same value at runtime.
By restricting the expressions used to identify the accessed instance in a synchronous access to parameters of the accessing stream that match the parameters of the accessed stream, the \emph{Sem-Sync-Output} inference rule guarantees that the exact instance being accessed exists.

\subsection{Limitations of the Type-Checking procedure.}
In this section, we elaborate on the limitations of the type inference.
First, we analyze the decidability of stream expression implication as used in the semantic type lattice and in \rtlola in general.
We presented an undecidability result and show how it can be bypassed by over-approximating the semantic type lattice. 
\begin{theorem}[\rtlola Undecidability]
	\label{theorem:undecidability}
	The satisfiability of the syntactic fragment of well-formed \rtlola specifications is undecidable.
\end{theorem}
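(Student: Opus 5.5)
We prove undecidability by a reduction from the halting problem for two-counter (Minsky) machines. Given a machine $M$, we construct a well-formed \rtlola specification $\varphi_M$, built only from output streams with unbounded integer values and offset accesses with defaults, such that $\varphi_M$ is satisfiable under the intended property iff $M$ halts. Satisfiability here means there is an input sequence and a valid evaluation model in which a designated boolean stream \streamName{halt} eventually evaluates to \texttt{true}.

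\textbf{The encoding.} We fix a global frequency, say $\globalFreq{1}$, for all streams, so that every discrete timestamp corresponds to one machine step. We introduce output streams \streamName{pc}, \streamName{c1} and \streamName{c2} for the program counter and the two counters. Each is defined by a case distinction over $\offsetAccess{\streamName{pc}}{()}{-1}$ combined with $\default{\cdot}{0}$, so that the stream at time $0$ holds the initial configuration. For an instruction ``increment $c_j$, goto $\ell$'' the update is an \texttt{if-then-else} on the previous \streamName{pc}. For ``if $c_j = 0$ goto $\ell$ else decrement $c_j$, goto $\ell'$'' it is a nested conditional comparing the previous counter with $0$. Since these are all function expressions over offset accesses with defaults, the specification has no synchronous accesses to non-aligned streams. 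It therefore type-checks and is well-formed: there are no zero-offset cyclic dependencies, because every self-reference uses offset $-1$. The stream \streamName{halt} is the expression $\streamName{pc} = \ell_{\mathit{halt}}$, synchronously accessing a stream with the same pacing.

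\textbf{Correctness.} The semantics is deterministic given the inputs, and here the streams do not depend on inputs at all. We show by induction on $t$ that in every valid evaluation model $\world(\streamName{pc})(\epsilon)(t)$, $\world(\streamName{c1})(\epsilon)(t)$ and $\world(\streamName{c2})(\epsilon)(t)$ equal the configuration of $M$ after $t$ steps. The induction uses the offset and default evaluation rules together with the timestamps $t$ with $\world(t) \bmod 1 = 0$; we may fix a time map in which every discrete step is a whole second. Hence \streamName{halt} is ever true iff $M$ halts. This gives a computable reduction from halting to satisfiability, so satisfiability is undecidable. The same reduction shows that the implication $\stCon$ between semantic types is undecidable: test whether $Expr(\lnot\streamName{halt})$ relates to $Expr(\mathit{true})$, i.e.\ whether $\mathit{true}$ implies $\lnot\streamName{halt}$ at all times.

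\textbf{Main obstacle.} The delicate step is aligning the discrete semantics with the real-time pacing. A periodic stream only takes values at timestamps where the frequency condition holds, so offsets count \emph{that stream's} previous values rather than previous timestamps. We therefore argue through $\prefix$: the offset by one retrieves the last element of the prefix before $t$. If the inputs can insert timestamps between the periodic ones, we choose event pacing on a single input instead. Satisfiability then quantifies over inputs, and the inputs can be chosen to arrive at every step. We also must check that \Cref{lemma:failable_semantics} guarantees a valid model exists, so that the specification is safe and does not satisfy the property vacuously.
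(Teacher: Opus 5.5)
Your proof is correct in outline, and it takes a genuinely different route from the paper's.

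\textbf{The paper's route.} The paper argues abstractly. Well-formedness gives a unique evaluation model per input sequence (its ``consistency''). Stream expressions are then claimed to be a consistent extension of Robinson arithmetic $Q$, and essential undecidability of $Q$ finishes the job. This places the undecidability entirely in the value-level arithmetic of function expressions (unbounded naturals with addition and multiplication). It needs no offsets, pacing or temporal recursion. However, it leaves informal what theory a specification defines and how $Q$ is interpreted in it.

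\textbf{Your route.} You give an explicit many-one reduction from Minsky-machine halting that exploits the temporal dimension. Self-referential offset accesses with defaults carry the machine state, so you need only successor, predecessor, zero test and a conditional, with no multiplication and no inputs. This buys several things:
\begin{itemize}
\item a precise notion of satisfiability and a checkable construction;
\item an explicit check that the instance is well-formed (every cycle is an $\mathit{EvalWith}$ cycle through offset $-1$) and well-typed;
\item a direct proof that the implication $\stCon$ underlying the semantic type lattice is undecidable, which is what the paper actually needs.
\end{itemize}

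\textbf{Two points to tighten.} The price of your route is that you must handle real-time details the paper never touches. The time map is part of $\world$ and is not determined by the inputs, so a witnessing model may use an arbitrary (even Zeno) time map.
\begin{itemize}
\item For ``satisfiable $\Rightarrow$ $M$ halts'', the induction must run over the $k$-th entry of $\prefix(\world, \mathtt{pc}, \epsilon, t)$, i.e.\ the $k$-th whole-second timestamp, not over $t$.
\item For ``$M$ halts $\Rightarrow$ satisfiable'', build the model directly with $\world(t) = t$. Safety only provides \emph{some} model per input, not one with your chosen time map.
\end{itemize}

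\textbf{A shared assumption.} Your counters and the paper's embedding of $Q$ both presuppose an unbounded value domain $\mathbb{V}$; the fixed-width $\mathit{Int}(x)$ types of the value type system would not suffice. This idealization is common to both arguments rather than a flaw of yours.
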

\begin{proof}
We derive the undecidability from the essential undecidability of Robinson arithmetic~\cite{robinson1950essentially}.
Let $\varphi$ be an arbitrary well-formed specification.
Since the specification is well-formed, only one evaluation model exists for each input sequence, resulting in the consistency criteria.
From the essentially undecidability of the Robinson Arithmetic, it follows that the satisfiability of stream expressions is undecidable, as they are a consistent extension of the Robinson Arithmetic.
Consequently, the satisfiability of \rtlola as a whole is undecidable.
\end{proof}
\Cref{theorem:undecidability} derives that the implication of stream expression is undecidable. 
In the implementation of the type-checker~\cite{frontend}, the stream-expression implication is therefore over-approximated through syntactic equivalence of the expressions.
One example where this over-approximation leads to the rejection of specifications is when an arithmetic theory is required to reason about an implication, e.g., the type inference cannot assure that $i > 7 \rightarrow i > 5$.
\section{Well-defined Parameterized Specifications}
\label{sec:welldefined}
The type system in \cref{sec:types} ensures the existence of an evaluation model for any input sequence.
Yet, the semantics still allow for multiple evaluation models.
This could lead to non-determinism of the monitor, undesirable in a safety-critical environment.
We follow the definitions in Lola~\cite{d2005lola} to identify specifications with deterministic behavior.
\begin{definition}[Well-definedness]
	A \rtlola specification is \emph{well-defined} iff for any set of input streams it has exactly one evaluation model.	
\end{definition}
Since this definition cannot be checked syntactically, we define a syntactic sufficient criterion analyzing the dependencies between streams.
%
%
%
%
Given an \rtlola specification $\varphi$.
A \emph{dependency graph} for $\varphi$ is a weighted and directed multi-graph $G = (V, E)$, with
	\[
	\begin{array}{r l}
		V &= \sr\\
		E &\subseteq V \times (L \times A) \times V\\
		L &= \{\mathit{Spawn}, \mathit{EvalWhen}, \mathit{EvalWith}, \mathit{Close} \}\\
		A &= \{\mathit{Sync} \} \cup \{\mathit{Hold}\} \cup (\mathit{Offset} \times \mathbb{N}\}) \cup \{(\mathit{Aggr}, \mathbb{R})\}
	\end{array}
	\]
	Where $E$ satisfies: For any expression $\mathit{expr}$ in a stream declaration $\sid \in \srout$, given its location $l \in L$ it holds:  
	\[
	\begin{array}{r l}
		\mathit{expr} = \syncAccess{\sid'}{p} &\implies (\sid, (l,\mathit{Sync}), \sid') \in E\\
		\mathit{expr} = \offsetAccess{\sid'}{p}{\mathit{off}} &\implies (\sid, (l,(\mathit{Offset}, \mathit{off})), \sid') \in E\\
		\mathit{expr} = \holdAccess{\sid'}{p	} &\implies (\sid, (l,\mathit{Hold}), \sid') \in E\\
		\mathit{expr} = \aggr{\sid'}{p}{\mathit{dur}}{f} &\implies (\sid, (l,(\mathit{Aggr}, \mathit{dur})), \sid') \in E\\
	\end{array}
	\]
Compared to stream equations in Lola, parameterized streams contain multiple stream expressions.
Hence, we include the location of the accesses in the edges to encode the constraints between these locations, e.g., the close expression given at time $t$ does not affect the evaluation of a stream at $t$.
Additionally, the different accesses need to be represented in the dependency graph.
We define the well-formedness criteria of a specification based on its dependency graph.
\begin{definition}[Wellformedness]
	\label{def:wellformed}
	A specification $\varphi$ is \emph{well-formed} iff all cycles $\{(v_1, (l_1, a_1), v_2)\dots(v_n, (l_n, a_n), v_1)\} \in E^*$ in the dependency graph $G = (V, E)$ satisfy:
	\[
	(\exists i. l_i = \mathit{Close}) \vee (\forall i. l_i = \mathit{EvalWith} \wedge \exists i'. a_{i'} = (\mathit{Offset}, o) \wedge o \neq 0)
	\]
\end{definition}
With this adapted definition of well-formedness, we can prove the same theorem as in the original Lola paper~\cite{d2005lola}.
\begin{proposition}[Well-Formedness implies Well-Definedness]
	Every well-formed \rtlola specification is well-defined.
\end{proposition}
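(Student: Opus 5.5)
We mirror the Lola argument of~\cite{d2005lola}: for a fixed input assignment $i$, we construct the evaluation model by well-founded recursion over a suitable order on \emph{evaluation positions}. We then show that any two models agreeing on $i$ coincide, by induction over the same order.

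\textbf{The order.} A position is a tuple $(\sid, \instid, t, l)$ with $l \in L$, recording which declaration of which instance is being evaluated at which discrete timestamp.

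The \stimeMap\ is not determined by the inputs. We therefore first fix it as part of the input sequence, reading $\world(s)(\epsilon)=i(s)$ in \Cref{def:spec_safety} as fixing timing as well. Uniqueness is then claimed only relative to that timing.

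We order positions primarily by $t$. Within a single timestamp, we order them by the dependency graph restricted to edges that are \emph{same-time}:
\begin{itemize}
\item $\mathit{Sync}$ edges;
\item $(\mathit{Offset},0)$ edges;
\item $\mathit{Hold}$ and $\mathit{Aggr}$ edges, because by the definition of $\prefix$ they read the value at $t$ itself;
\item the implicit dependencies of $\mathit{EvalWith}$ and $\mathit{EvalWhen}$ on $\mathit{Spawn}$ via $\alive$.
\end{itemize}
$\mathit{Close}$ edges are treated as pointing strictly backwards in time. This is justified because $\alive$ at $\tend$ only inspects $\isClosed$ at $t' < \tend$, so a close condition evaluated at $t$ influences nothing at $t$.

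Offsets $o\neq 0$ point to strictly earlier discrete timestamps of the accessed instance's prefix. The prefix only contains values at times $\le t$, and any nonzero offset moves strictly into the past.

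\textbf{Key step: acyclicity of the same-time graph.} We must show that the same-time graph at any fixed $t$ is acyclic. Take a same-time cycle in the dependency graph. By \Cref{def:wellformed}, either some edge has $l_i=\mathit{Close}$, or all edges are $\mathit{EvalWith}$ and some edge is an offset $o\neq 0$. In both cases that edge is not same-time by the classification above, which is a contradiction. Since a specification has finitely many streams, the within-time order is a finite DAG. Combined with $t\in\mathbb{N}$, this yields a well-founded order on positions.

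Instances are unboundedly many, but an edge from an instance always targets instances computed by expressions evaluated at lower positions. This does not affect well-foundedness, because the order is defined on $(\sid,l)$ per timestamp, uniformly over $\instid$.

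\textbf{Existence and uniqueness.} By well-founded induction, each position's value is uniquely determined by the inference rules of \Cref{fig:sema:instances:complete}, given the values at smaller positions. These are the rules for $\alive$, $\isSpawned$, and $\isClosed$, and the expression rules, whose premises only mention smaller positions.

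Define $\world$ position by position. Any $\world'\in\semEval{\varphi}$ with the same inputs must agree with $\world$ at every position, by the same induction. This gives uniqueness.

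Existence needs the semantics to produce \emph{some} value. We read well-definedness here as ``at most one model, and exactly one whenever the semantics does not fail''. This agrees with the intended role of the typing result \Cref{theorem:typechecked-safety}, which provides non-failure through \Cref{lemma:failable_semantics}. If the proposition is meant unconditionally, we instead note that the rules are deterministic and total up to $\bot$ from failed sync accesses. In that case the claim becomes exactly one model whenever $\textit{safe}(\varphi)$ holds.

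\textbf{Main obstacle.} The delicate part is the classification of edges as same-time or strictly past. This applies especially to $\mathit{Hold}$, $\mathit{Aggr}$, and $\mathit{EvalWhen}$ edges, which include the current timestamp. It also applies to the spawn/eval interaction through $\alive$, which is an implicit edge not listed in $E$.

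Well-formedness forbids same-time cycles only if those implicit edges are accounted for. I would therefore first add the implicit edges $(\sid,\mathit{EvalWith/When})\to(\sid,\mathit{Spawn})$ and check that the cycle condition still excludes them. Such a cycle either passes through a $\mathit{Spawn}$ location, which is not $\mathit{EvalWith}$, so it must contain a $\mathit{Close}$ edge, or it stays within $\mathit{EvalWith}$ edges and must contain a nonzero offset. The first case is the subtle one: I expect it to require arguing that spawn-only cycles are caught because they are not all-$\mathit{EvalWith}$. If some same-time spawn cycle without Close survives, this is exactly where the proof could break, and I would then restrict the claim accordingly.
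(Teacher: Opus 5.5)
Your framing of the statement (fix the $\stimeMap$, take existence from safety) matches what the paper intends. The paper gives no proof beyond asserting that the Lola argument carries over, and the opening of \Cref{sec:welldefined} credits existence to the type system. Your plan of a Lola-style well-founded induction, ordered by $t$ and then by an acyclic same-time dependency relation, is also the intended one.

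The gap is your classification of $(\mathit{Offset},o)$ edges with $o\neq 0$ as strictly past. Here $\prefix(\world,\sid,\instid,t)$ contains $(t,v)$ whenever the accessed instance has a value at $t$. The access $\offsetAccess{\sid}{p}{o}$ returns $\mathit{prefix}[|\mathit{prefix}|-o]$, counted from the end. The returned value does lie in the past. But \emph{which} past value it is depends on whether $\world(\sid)(\instid)(t)\in\mathbb{V}$, i.e.\ on the target's liveness, pacing and $\mathit{EvalWhen}$ at $t$. So the premise of the offset rule need not sit at a smaller position in your order, and your induction step fails there.

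Your key step uses only the existence of a nonzero offset, never the clause $\forall i.\,l_i=\mathit{EvalWith}$, so it would go through unchanged without that clause. Without the clause, the argument would certify a specification that is not well-defined. Take a stream $a$ that evaluates to an input $x$ \emph{when} its own offset-$1$ value (defaulting to an input $y$) equals $y$. Once $a$ has produced $\mathit{false},\mathit{true}$ and $y(t)=\mathit{true}$, both options at $t$ are consistent: ``$a$ active at $t$'' (the offset reads $\mathit{true}$) and ``$a$ inactive at $t$'' (it reads $\mathit{false}$).

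The repair is to split an instance's state at $t$ into two parts:
\begin{itemize}
\item an \emph{activation} part (liveness, pacing, $\mathit{EvalWhen}$);
\item a \emph{value} part ($\mathit{EvalWith}$).
\end{itemize}
Sync, Hold, Aggr and $(\mathit{Offset},0)$ reads depend at time $t$ on both parts of the target, while nonzero offsets depend only on its activation part.

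A same-time cycle in this refined graph never visits a $\mathit{Close}$ location, because nothing at $t$ depends on $\isClosed$ at $t$. What points backwards is this use of the close condition inside $\alive$, not the Close-labelled read itself. Hence the cycle projects to a cycle of $E$ without Close labels. By \Cref{def:wellformed}, all its labels are then $\mathit{EvalWith}$ and it contains a nonzero offset. But the edge following that offset leaves the target's Spawn or EvalWhen clause, so its label is not $\mathit{EvalWith}$, a contradiction. This is exactly where the all-$\mathit{EvalWith}$ clause is needed.

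The same projection also settles the worry in your last paragraph. An implicit edge into a Spawn location only makes the next outgoing label non-$\mathit{EvalWith}$. That forces a Close label, which rules out a same-time cycle.
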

\section{Evaluation}
\label{sec:evaluation}
We evaluate the performance of the type analysis using realistic specifications from the aerospace domain \cite{volostream,DBLP:conf/fm/BaumeisterFKS24}.
Furthermore, we use artificially generated specifications that scale in the number of streams to quantify its scalability.
All specifications are included in \Cref{appendix:benchmarks} of this paper.

The experiments are run natively on a MacBook Pro from 2020 equipped with an M1 processor and 16Gb of RAM.
We use the criterion~\cite{criterion} micro benchmarking library to automatically prime and sample the median runtime of the type analysis from at least 100 executions.

For the realistic benchmarks, \Cref{fig:realistic_bench} demonstrates that the performance impact on real-world complex specifications like a geofence is negligible with the analysis taking less than one second.

All three artificial specifications are crafted to be arbitrarily scaled and represent a worst-case scenario for the analysis.
The first specification evaluates the performance of the type checking analysis based on an increasing number of streams that synchronously access each other in a chain.
The second one evaluates the performance impact of parameters by adapting each stream in the previous case with an increasing number of parameters.
Lastly, we test the performance of the semantic type checking by introducing \emph{when} statements to the \emph{eval} declarations, adding a conjunct for each step.

\begin{figure}[t]
	\begin{subfigure}{0.45\textwidth}
		\bgroup
		\footnotesize
		\def\arraystretch{1.5}%
		\setlength\tabcolsep{0.25em}
		\begin{tabular}{l  r  r}
			Specification & \# Lines & Time [ms]\\
			\hline
			Watchdog & 9 & 0.69\\
			RCC & 10 & 0.84 \\
			FFD & 15 & 2.82\\
			Intruder & 14 & 4.76\\
			Waypoints & 15 & 5.51\\
			Geofence & 86 & 611.32\\
		\end{tabular}
		\egroup
		\vspace{1em}
		\caption{Performance results of the aerospace benchmarks.}
		\label{fig:realistic_bench}
	\end{subfigure}
	\hfill
	\begin{subfigure}{0.5\textwidth}
		\includegraphics[width=\textwidth]{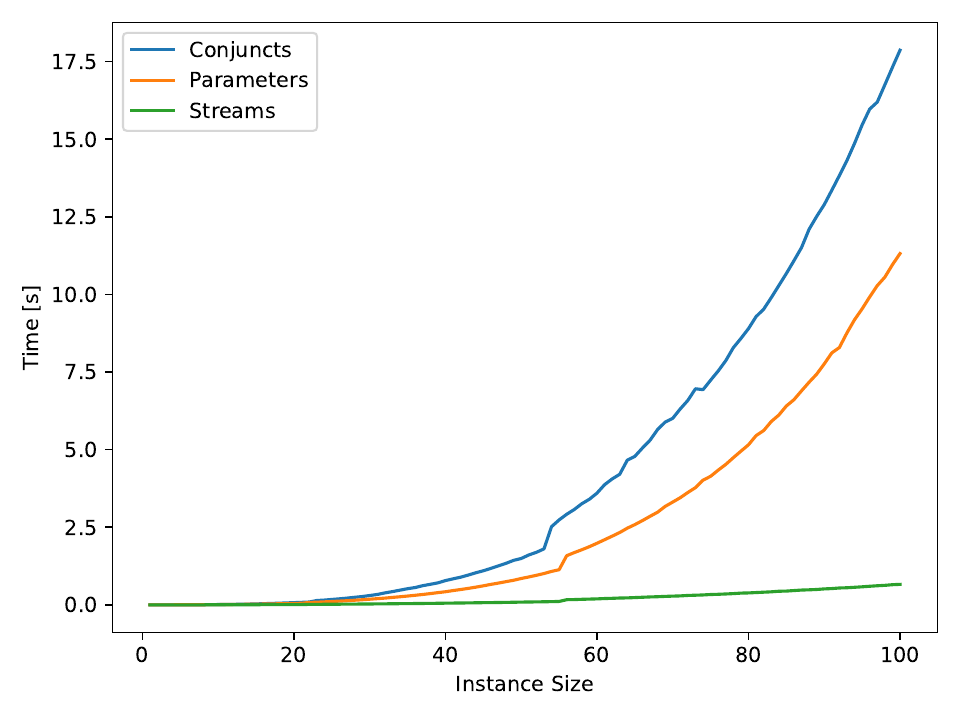}
		\caption{Performance results of the artificial benchmarks.}
		\label{fig:artifical_bench}
	\end{subfigure}
	\caption{Evaluation Results}
	\label{fig:eval_results}
\end{figure}

The artificial performance benchmarks in \Cref{fig:artifical_bench} indicate that adding parameters to streams or conjuncts to evaluation declarations impacts the performance of the type analysis compared to the \emph{Streams} benchmark without any of them.
While the runtime seems to increase exponentially with the size of the instance, it is most likely a quadratic relation stemming from the underlying constraint-solving algorithm in the type checker.
Overall, even with as many as 100 streams and equally many conjuncts in the largest expression, the runtime of the type analysis did not exceed 18 seconds.

\section{Conclusion}
\label{sec:conclusion}

This paper introduced parameterized streams to \rtlola, enabling the handling of infinite data domains in real-time environments.
We provided a formal foundation for using these streams to monitor safety-critical cyber-physical systems, such as unmanned aerial vehicles.
To ensure reliability, we identified a safe subset of \rtlola specifications that are guaranteed not to fail at runtime.
This was achieved through a novel refinement type system that statically verifies these specifications.
Unlike traditional type systems in programming languages, this approach reasons about the temporal behavior of streams, ensuring values are available when needed.
Given the undecidability of the problem, we implemented a syntactic over-approximation. 
We also extended the definition of well-formed specifications to detect temporal cycles that could lead to non-determinism.

Our evaluation, based on aerospace examples and worst-case benchmarks, demonstrates that the proposed type analysis is scalable and the syntactic fragment is sufficient for real-world applications.
Looking forward, we aim to refine the safe specification fragment through abstract interpretation, incorporating semantic reasoning into the analysis.
Additionally, we will explore more efficient monitor generation for parameterized specifications in real-time environments.


%
%

\bibliographystyle{splncs04}
\bibliography{references}

\appendix
\section{\rtlola AST-Syntax of an RTLola Expression}
\label{rtlola:syn:expr:full}
\begin{bnfgrammar}
	expr : Expressions ::= \syncAccess{id}{p} $\mid$ \holdAccess{id}{p}
	| \offsetAccess{id}{p}{off} $\mid$ \default{expr}{expr}
	| \aggr{id}{p}{dur}{f\_a}
	| \functionExp{f}{$\texttt{expr}^*$} $\mid$ \constant{const}
	| \parameterAccess{p\_idx} $\mid$ \tuple{expr*}
	;;
	const : Constants \in \mathbb{V}
	;;
	id : Stream identifiers ::= id\_in | id\_out
	;;
	id\_in : Input Stream identifiers \in \idsI
	;;
	id\_out : Output Stream identifiers \in \idsO
	;;
	p : Parameters \in \texttt{expr} \times \dots \times \texttt{expr}
	;;
	p\_idx : Parameter Indexs \in \mathbb{N}
	;;
	off : Offsets \in \mathbb{Z}
	;;
	dur : Durations \in \mathbb{R}
	;;
	f : Functions \in \svalue^* \rightarrow \svalue
	;;
	f\_a : Aggregation functions \in (\mathbb{R}, \mathbb{V})^* \rightarrow \svalue
\end{bnfgrammar}

\newpage
\section{Relational Semantics}
\subsection{Semantic Inference Rules for Expressions}
\label{fig:sema:expressions}
	\begin{mathpar}
    	\inferrule[eval-expr-val]{ \val \in \mathbb{V} }{\evalTo{\constant{\val}}{\val}}\qquad
        \inferrule[eval-expr-syn]{\evalTo{\mathit{expr}}{\mathit{inst'}}\\\worldaccthree{\sid}{\mathit{inst'}}{t} = v}{\evalTo{\syncAccess{\sid}{\mathit{expr}}}{v}}\qquad\\
        \inferrule[eval-expr-function]{\evalTo{p_1}{v_1} \dots \evalTo{p_n}{v_n}\\ f(v_1, \dots, v_n) = v_{\mathit{res}}}{\evalTo{\functionExp{f}{p_1,\dots,p_n}}{v_{\mathit{res}}}}\qquad \\
        \inferrule[eval-expr-off]{\evalTo{\mathit{expr}}{\mathit{inst'}} \\ \mathit{prefix} = \prefix(\world,\sid, \mathit{inst'}, t) \\|\mathit{prefix}| > \mathit{off} \\ \mathit{prefix}[|\mathit{prefix}| - \mathit{off}] = (t', v)}{\evalTo{\offsetAccess{\sid}{\mathit{expr}}{\mathit{off}}}{v}}\qquad\\
        \inferrule[eval-expr-off-dft]{\evalTo{\mathit{expr}}{\mathit{inst'}} \\ \mathit{prefix} = \prefix(\world,\sid, \mathit{inst'}, t) \\|\mathit{prefix}| \leq \mathit{off}}{\evalTo{\offsetAccess{\sid}{\mathit{expr}}{\mathit{off}}}{\bot}}\qquad\\
        \inferrule[eval-expr-hold]{\evalTo{\mathit{expr}}{\mathit{inst'}} \\ \mathit{prefix} = \prefix(\world,\sid, \mathit{inst'}, t) \\ \mathit{prefix} = \mathit{prefix}' \cdot (t',v)}{\evalTo{\holdAccess{\sid}{\mathit{expr}}}{v}}\qquad\\
        \inferrule[eval-expr-hold-dft]{\evalTo{\mathit{expr}}{\mathit{inst'}} \\ \mathit{prefix} = \prefix(\world,\sid, \mathit{inst'}, t)\\\mathit{prefix} = \epsilon}{\evalTo{\holdAccess{\sid}{\mathit{expr}}}{\bot}}\qquad\\
        \inferrule[eval-expr-no-default]{\evalTo{expr}{v} \\ v \in \mathbb{V}}{\evalTo{\default{\mathit{expr}}{\mathit{dft}}}{v}}\qquad
        \inferrule[eval-expr-take-default]{\evalTo{expr}{\bot} \\ \evalTo{\mathit{dft}}{v}}{\evalTo{\default{\mathit{expr}}{\mathit{dft}}}{v}}\qquad\\
        \inferrule[eval-expr-parameter-access]{\instid(p\_idx) = v}{\evalTo{\parameterAccess{p\_idx} }{v}}\qquad\\
        \inferrule[eval-expr-aggregate]{\evalTo{\expr}{\mathit{inst'}} \\ \mathit{prefix} = \prefix(\world,\sid, \mathit{inst'}, t) \\ window = \window(\world, prefix, \worldaccone{t} - dur) \\ f_a(window) = v}{\evalTo{\aggr{\sid}{\expr}{\mathit{dur}}{f_a} }{v}}\qquad\\
        \inferrule[eval-expr-tuple]{\evalTo{\mathit{expr}_1}{v_1} \dots \evalTo{\mathit{expr}_n}{v_n} \\ (v_1, \dots, v_n) = n}{\evalTo{\tuple{\mathit{expr}_1, \dots, \mathit{expr}_n}}{v}}\qquad\\
    \end{mathpar}

\subsection{Semantic Inference Rules for Streams}
\label{fig:sema:instances:complete}
	\begin{mathpar}
    	\inferrule[eval-instance-value]{\alive(\world, \sid, \instid, t) = \tstart\\ \tstart \neq \bot \\\pacing{t}{\pac}{\tstart} \\ v \in \mathbb{V} \\ \evalTo{\mathit{when}}{true}\\\evalTo{\mathit{with}}{v}}{\evalTo{\sOutput{\sid}{\_}{\_}{\sEval{\pac}{\mathit{when}}{\mathit{with}}}{\_}}{v}}\qquad\\
    	\inferrule[eval-instance-false-eval-when]{\alive(\world, \sid, \instid, t) = \tstart\\ \tstart \neq \bot \\\pacing{t}{\pac}{\tstart} \\ v \in \mathbb{V} \\ \evalTo{\mathit{when}}{false}}{\evalTo{\sOutput{\sid}{\_}{\_}{\sEval{\pac}{\mathit{when}}{\mathit{with}}}{\_}}{\bot}}\qquad\\
    	\inferrule[eval-instance-false-pacing]{\alive(\world, \sid, \instid, t) = \tstart\\ \tstart \neq \bot \\ \neg \pacing{t}{\pac}{\tstart}}{\evalTo{\sOutput{\sid}{\_}{\_}{\sEval{\pac}{\mathit{when}}{\mathit{with}}}{\_}}{\bot}}\qquad\\
    	\inferrule[eval-instance-not-alive]{\alive(\world, \sid, \instid, t) = \bot}{\evalTo{\sOutput{\sid}{\_}{\_}{\sEval{\pac}{\mathit{when}}{\mathit{with}}}{\_}}{\bot}}\qquad
	\end{mathpar}

\subsection{Definition of IsSpawned and IsClosed}
\label{def:is_spawned}
\begin{definition}[IsSpawned \& IsClosed]
	Given an evaluation model $\world \in \World$, a stream identification $\sid \in \sr$, an instance identification $\instid \in \instanceId$, and a timestamp $t \in \stime$, the instance $\instid$ of stream $\sid$ is spawned at $t$ iff
	\[
	\begin{array}{l}
		\isSpawned : \World \times \sr \times \instanceId \times \stime \rightarrow \mathbb{B}\\
		\isSpawned(\world, \sid, \instid, t) \\~~:=\pacing{\tstart}{\spawnPac{\sid}}{0} \wedge \evalToAt{t}{\instid}{\spawnWhen{\sid}}{\mathit{true}}\\~~~~\wedge \evalToAt{t}{\instid}{\spawnWith{\sid}}{\instid}\\
	\end{array}
	\]
	and is closed iff
	\[
	\begin{array}{l}
		\isClosed : \World \times \sr \times \instanceId \times \stime \times \stime \rightarrow \mathbb{B}\\
		\isClosed(\world, \sid, \instid, t, t_{\instid \mathit{start}})\\~~:=\pacing{t}{\closePac{\sid}}{t_{\instid \mathit{start}}} \rightarrow \evalToAt{t}{\instid}{\closeWhen{\sid}}{\mathit{true}}\\
	\end{array}
	\]
\end{definition}

\section{Type System Soundness Proofs}
\label{app:proofs}
\subsection{Proof of the Failable Semantics Lemma}
\label{proof:failable}
\begin{proof}[Failable Semantics]
	Proof by structural induction over the syntax of \rtlola specifications.
	As a base cases regard the constant and parameter access, by the assumptions made above these rules cannot fail.
	For all other expressions we argue about their semantic definition.
	For all their subexpressions we directly apply the induction hypothesis.
	The value type analysis ensures that a hold an offset operator are always equipped with a default operator and hence can never produce $\bot$
	The default expression itself can never produce $\bot$.
	For the function and aggregation expression, the value type analysis enforces that, if the function could return $\bot$, then it is again equipped with a default value.
	For the tuple, the induction hypothesis holds.
	As a result only the synchronous access could evaluate to $\bot$ if $\worldaccthree{\sid}{\mathit{inst'}}{t} = \bot$.
	It is easy to see the semantics can only fail if an expression produces $\bot$ when it shouldn't and as the synchronous access is the only expression that can produce $\bot$ the lemma follows.
\end{proof}

\subsection{Proof of the Abstract Safety Theorem}
\label{proof:abstract-safety}
\begin{proof}[Abstract Safety]
	Proof by contradiction.
	Not $safe(\varphi)$, implies that there is an input sequence without a coherent world.
	Due to \Cref{lemma:failable_semantics} this is due to a synchronous access producing $\bot$.
	Yet, for all synchronous accesses the condition of the implication holds.
	Hence, it the condition must also hold for the synchronous access that produced $\bot$.
	This is a contradiction, as the condition enforces, that the accessed stream is active and produced a value.
\end{proof}

\subsection{Proof of the Typechecked-Safety Theorem}
\label{proof:typechecked}
\begin{proof}
	The theorem follows directly from the type inference rules for the synchronous access and the output stream.
	Each conjunction of the $\mathit{active}$ predicate of on the left side of the implication implies the corresponding conjunct of the $\mathit{active}$ predicate of the right side of the implication.
	To make an example, the \emph[Sem-Output] rule requires that the semantic type of the spawn condition of the stream is more concrete (or even equal) to the semantic spawn type inferred for the eval clause.
	The \emph{Sem-Sync-Output} rule applied to synchronous accesses infers a semantic spawn type that is equal or more concrete than the one of the accessed stream.
	Through this chain of requirements it follows from the definition of the semantic type lattice, that the \emph{when} expression of the accessing streams spawn clause must always imply the \emph{when} expression of the accessed streams spawn clause.
	From this one can conclude, that the $\mathit{AliveSince}$ predicate of the accessing stream evaluates to a point in time equal or after the $\mathit{AliveSince}$ predicate of the accessed stream.
	The same argument can be conducted analogously for the other conjuncts of the $\mathit{active}$ predicate.
\end{proof}

\section{RTLolas Value Type System}
\label{appendix:value_types}
\begin{definition}[The value type lattice]
    The set of primitive value types is defined as:
    \[
        \vtPrim:= \{Bool, String, UInt(x), Int(x), Float(y) \mid x \in \{8, 16, 32, 64\}, y \in \{32, 64\}\}    
    \]
    Each of these primitive types can be wrapped in an option to indicate the absence of a value.
    \[
        \VT := \vtPrim \cup \{\optional{\vt} \mid \vt \in \vtPrim\} \cup \{(\vt_1 \times ... \times \vt_n) \mid \vt_1,...,\vt_n \in \VT\} \cup \{\top, \bot\}
    \]
    The relation $\vtCon$ is minimal with regard to the following equations:
    \begin{alignat*}{2}
        \vt &\vtCon \top && \Longleftrightarrow \vt \in \VT\\
        \bot &\vtCon \vt && \Longleftrightarrow \vt \in \VT\\
        Bool &\vtCon Bool && \Longleftrightarrow true\\
        String &\vtCon String && \Longleftrightarrow true\\
        UInt(x_1) &\vtCon UInt(x_2) && \Longleftrightarrow x_1 \geq x_2\\
        Int(x_1) &\vtCon Int(x_2) && \Longleftrightarrow x_1 \geq x_2\\
        Float(y_1) &\vtCon Float(y_2) && \Longleftrightarrow y_1 \geq y_2\\
        \optional{\vt_1} &\vtCon \optional{\vt_2} && \Longleftrightarrow \vt_1 \vtCon \vt_2
    \end{alignat*}
\end{definition}
Note that the value types can easily be extended to fixed-size arrays like strings or fixed-arity compound types like tuples or structs.
To accommodate explicit type annotations we assume that the following two functions are given:
\begin{align*}
	\alpha &: \ids \rightarrow \VT\\
	\rho &: \ids \rightarrow \VT*
\end{align*}
Where $\alpha$ assigns a value type to each stream (or $\top$ if not annotated) and $\rho$ assigns a value type to each parameter of the given stream.

\begin{figure}
	\begin{mathpar}
		\inferrule[value-constant]
		{ }
		{\vt \types{} \constant{x}}
		
		\inferrule[value-sync-out]
		{
			\sid \in \idsO\\
			\vt' \types{} \sOutput{\sid}{p\_def}{s}{e}{c}\\
			\vt \vtCon \vt'\\
			\parameterTypes{\sid} \types{} p
		}
		{\vt \types{} \syncAccess{\sid}{p}}
		
		\inferrule[value-sync-in]
		{
			\sid \in \idsI\\
			\vt' \types{} \sInput{\sid}\\
			\vt \vtCon \vt'\\
			\parameterTypes{\sid} \types{} p
		}
		{\vt \types{} \syncAccess{\sid}{p}}
		
		\inferrule[value-offset-out]
		{
			\sid \in \idsO\\
			\vt' \types{} \sOutput{\sid}{p\_def}{s}{e}{c}\\
			\vt \vtCon \vt'\\
			\parameterTypes{\sid}\types{} p
		}
		{\optional{\vt} \types{} \offsetAccess{\sid}{p}{o}}
		
		\inferrule[value-offset-in]
		{
			\sid \in \idsI\\
			\vt' \types{} \sInput{\sid}\\
			\vt \vtCon \vt'\\
			\parameterTypes{\sid} \types{} p
		}
		{\optional{\vt} \types{} \offsetAccess{\sid}{p}{o}}
		
		\inferrule[value-hold-output]
		{
			\sid \in \idsO\\
			\vt'\types{} \sOutput{\sid}{p\_def}{s}{e}{c}\\
			\vt \vtCon \vt'\\
			\parameterTypes{\sid} \types{} p
		}
		{\optional{\vt} \types{} \holdAccess{\sid}{p}}
		
		\inferrule[value-hold-input]
		{
			\sid \in \idsI\\
			\vt' \types{} \sInput{\sid}\\
			\vt \vtCon \vt'\\
			\parameterTypes{\sid} \types{} p
		}
		{\optional{\vt} \types{} \holdAccess{\sid}{p}}
		
		\inferrule[value-aggregation-output]
		{
			\sid \in \idsO\\
			\vt \types{} \sOutput{\sid}{p\_def}{s}{e}{c}\\
			f: \vt_f \times ... \times \vt_f \rightarrow \vt'\\
			\vt_f \vtCon \vt\\
			\parameterTypes{\sid} \types{} p
		}
		{\vt' \types{} \aggr{\sid}{p}{d}{f}}
		
		\inferrule[value-aggregation-input]
		{
			\sid \in \idsI\\
			\vt \types{} \sInput{\sid}\\
			f: \vt_f \times ... \times \vt_f \rightarrow \vt'\\
			\vt_f \vtCon \vt\\
			\parameterTypes{\sid} \types{} p
		}
		{\vt' \types{} \aggr{\sid}{p}{d}{f}}
		
		\inferrule[value-default]
		{
			\optional{v_1} \types{} exp_1\\
			\vt_2 \types{} exp_2\\
			\vt \vtCon \vt_1\\
			\vt \vtCon \vt_2
		}
		{\vt \types{} \default{exp_1}{exp_2}}
		
		\inferrule[value-parameter]
		{
			\parameterTypes{\sid}[idx] = \vt
		}
		{\vt \types{\sid} \parameterAccess{idx}}
		
		\inferrule[value-function]
		{
			f: \vt_1 \times ... \times \vt_n \rightarrow \vt\\
			\vt_1' \types{} exp_1\\
			\vt_1 \vtCon \vt_1'\\
			\dots\\
			\vt_n' \types{} exp_n\\
			\vt_n \vtCon \vt_n'\\
		}
		{\vt \types{} \functionExp{f}{expr_1, ..., expr_n}}
		
		\inferrule[value-tuple]
		{
			\vt_1 \types{} expr_1\\
			\dots\\
			\vt_n \types{} expr_n
		}
		{(\vt_1 \times ... \times \vt_n) \types{} \tuple{expr_1, ..., expr_n}}
	
	\end{mathpar}
    \caption{RTLola value type inference rules}
\end{figure}

\begin{figure}
	\begin{mathpar}
   		\inferrule[value-output]
   		{
   			s = \sSpawn{pt_s}{e_s^f}{e_s}\\
   			\vt_s \types{} \sSpawn{pt}{e_s^f}{e_s}\\
   			\parameterTypes{\sid} \vtCon \vt_s\\
   			e = \sEval{pt_e}{e_e^f}{e_e}\\
   			\vt_e \types{} \sEval{pt_e}{e_e^f}{e_e}\\
   			\vt \vtCon \vt_e\\
   			c = \sClose{pt_c}{e_c^f}\\
   			() \types{} \sClose{pt_c}{e_c^f}
   		}
   		{\vt \types{} \sOutput{\sid}{p\_def}{s}{e}{c}}
   		
   		\inferrule[value-input]
   		{
   			\streamTypes{\sid} = \vt
   		}
   		{\vt \types{} \sInput{\sid}}
   		
   		\inferrule[value-spawn]
   		{
   			\bool \types{} e_s^f\\
   			\vt \types{} e_s
   		}
   		{\vt \types{} \sSpawn{pt_s}{e_s^f}{e_s}}
   		
   		\inferrule[value-eval]
   		{
   			\bool \types{} e_e^f\\
   			\vt \types{} e_e
   		}
   		{\vt \types{} \sEval{pt_e}{e_e^f}{e_e}}
   		
   		\inferrule[value-close]
   		{
   			\bool \types{} e_c^f
   		}
   		{() \types{} \sClose{pt_c}{e_c^f}}
	\end{mathpar}
    \caption{Value type rules for streams}
\end{figure}

\newpage
\section{RTLolas Pacing Type System}
\label{appendix:pacing_types}
\begin{figure}[H]
	\begin{mathpar}
		\inferrule[pacing-input]
		{ }
		{(\top, (Event(id), \top) \types{} \sInput{\sid}}
		
		\inferrule[pacing-output]
		{
			(\top, \pt_e^s, \top) \types{} \sSpawn{\pt_s'}{e_s^f}{e_s}\\
			(\pt_s^e, \pt_e^e, \pt_c^e) \types{} \sEval{\pt_e'}{e_e^f}{e_e}\\
			(\pt_s^c, \pt_e^c, \pt_c^c) \types{} \sClose{\pt_c'}{e_c^f}\\
			\pt_s^c \in \{\top, \pt_e^s\}\\
			\pt_c^c \in \{\top, \pt_e^c\}\\
			\pt_e^e = \lf{x} \rightarrow \pt_e^s = \pt_s^e \land \pt_e^c = \pt_c^e \\
			\pt_e^c = \lf{x} \rightarrow \pt_e^s = \pt_s^c \land \pt_e^c = \pt_c^c\\
			\pt_e^s \ptCon \pt_s^e\\
			\pt_c^e \ptCon \pt_e^c\\
		}
		{(\pt_e^s, \pt_e^e, \pt_e^c) \types{} \sOutput{\sid}{p\_def}{\sSpawn{\pt_s'}{e_s^f}{e_s}}{\sEval{\pt_e'}{e_e^f}{e_e}}{\sClose{\pt_c'}{e_c^f}}}
		
		\inferrule[pacing-spawn]
		{
			\ptt_f \types{} e_s^f\\
			\ptt' \ptCon \ptt_f\\
			\ptt_e \types{} e_s\\
			\ptt' \ptCon \ptt_e\\
			\pt_e \ptCon \pt_e'
		}
		{(\pt_s', \pt_e, \pt_c') \types{} \sSpawn{\pt_e}{e_s^f}{e_s}}
		
		\inferrule[pacing-eval]
		{
			\ptt_f \types{} e_e^f\\
			\ptt' \ptCon \ptt_f\\
			\ptt_e \types{} e_e\\
			\ptt' \ptCon \ptt_e\\
			\pt_e \ptCon \pt_e'
		}
		{(\pt_s', \pt_e, \pt_c') \types{} \sEval{\pt_e}{e_e^f}{e_e}}
		
		\inferrule[pacing-close]
		{
			\ptt_f \types{} e_c^f\\
			\ptt' \ptCon \ptt_f\\
			\pt_e' \ptCon \pt_e
		}
		{(\pt_s', \pt_e, \pt_c') \types{} \sClose{\pt_e}{e_c^f}}
	\end{mathpar}
    \caption{Inference rules for the pacing type analysis of streams.}
    \label{fig:pacing_streams}
\end{figure}

\begin{figure}[H]
	\begin{mathpar}
		\inferrule[pacing-constant]
		{ }
		{(\top, \top, \top) \types{} \constant{c}}
		
		\inferrule[pacing-sync-input]
		{
			\ptt \types{} \sInput{\sid}\\
		}
		{
			\ptt \types{} \syncAccess{\sid}{p}
		}
		
		\inferrule[pacing-sync-output]
		{
			\ptt' \types{} \sOutput{\sid}{n_p}{s}{e}{c}\\
			\ptt \ptCon \ptt'\\
			\ptt_1 \types{} e_1\\
			\ptt \ptCon \ptt_1\\
			\dots\\
			\ptt_{n_p} \types{} e_{n_p}\\
			\ptt \ptCon \ptt_{n_p}
		}
		{
			\ptt \types{} \syncAccess{\sid}{e_1,...,e_{n_p}}
		}
		
		\inferrule[pacing-hold]
		{
			\ptt_1 \types{} e_1\\
			\ptt \ptCon \ptt_1\\
			\dots\\
			\ptt_{n_p} \types{} e_{n_p}\\
			\ptt \ptCon \ptt_{n_p}
		}
		{\ptt \types{} \holdAccess{\sid}{e_1, ..., e_{n_p}}}
		
		\inferrule[pacing-offset-input]
		{
			\ptt \types{} \sInput{\sid}\\
		}
		{
			\ptt \types{} \offsetAccess{\sid}{p}{o}
		}
		
		\inferrule[pacing-offset-output]
		{
			\ptt' \types{} \sOutput{\sid}{n_p}{s}{e}{c}\\
			\ptt \ptCon \ptt'\\
			\ptt_1 \types{} e_1\\
			\ptt \ptCon \ptt_1\\
			\dots\\
			\ptt_{n_p} \types{} e_{n_p}\\
			\ptt \ptCon \ptt_{n_p}
		}
		{
			\ptt \types{} \offsetAccess{\sid}{e_1, ..., e_{n_p}}{o}
		}
		
		\inferrule[pacing-default]
		{   
			\ptt_1 \types{} e_1\\
            \ptt_2 \types{} e_2\\
            \ptt \ptCon \ptt_1\\
            \ptt \ptCon \ptt_2
        }
		{\ptt \types{} \default{e_1}{e_2}}
		
		\inferrule[pacing-parameter]
		{ }
		{(\top, \top, \top) \types{} \parameterAccess{i}}
		
		\inferrule[pacing-function]
		{
			\ptt_1 \types{} e_1\\
			\ptt \ptCon \ptt_1\\
			\dots\\
			\ptt_n \types{} e_n\\
			\ptt \ptCon \ptt_n
		}
		{\ptt \types{} \functionExp{f}{e_1,...e_n}}
		
		\inferrule[pacing-aggregation]
		{
			\ptt \ptCon \textit{Periodic}\\
			\ptt_1 \types{} e_1\\
			\ptt \ptCon \ptt_1\\
			\dots\\
			\ptt_{n_p} \types{} e_{n_p}\\
			\ptt \ptCon \ptt_{n_p}
		}
		{\ptt \types{} \aggr{\sid}{e_1, ..., e_{n_p}}{d}{f}}
		
		\inferrule[pacing-tuple]
		{
			\ptt_1 \types{} e_1\\
			\ptt \ptCon \ptt_1\\
			\dots\\
			\ptt_{n_p} \types{} e_n\\
			\ptt \ptCon \ptt_n
		}
		{\ptt \types{} \tuple{e_1, ..., e_n}}
	\end{mathpar}
	\caption{Inference rules for the pacing type analysis of expressions.}
	\label{fig:pacing_expr}
\end{figure}

\newpage
\section{RTLolas Semantic Type System}
\label{appendix:semantic_types}
\begin{figure}[H]
    \begin{mathpar}
    	\inferrule[Sem-Constant]
    	{ }
    	{(\top, \top, \bot) \types{} \constant{v}}
    
    	\inferrule[Sem-Sync-In]
    	{
    		\sid \in \idsI
    	}
    	{(\top, \top, \bot) \types{} \syncAccess{\sid}{p}}
    	
    	\inferrule[Sem-Sync-Out]
    	{
    		\sOutput{\sid}{n_p}{s}{e}{c} \in \outputs\\
    	    \sOutput{\sid'}{n_p'}{s'}{e'}{c'} \in \outputs\\
            \forall 1 \leq k \leq n.\ e_k = \parameterAccess{i} \land k \paramEq{\sid}{\sid'} i\\
            \stt \types{} \sOutput{\sid}{n_p}{s}{e}{c}\\
            \stt'\stCon \stt
        }
    	{\stt' \types{\sid'} \syncAccess{\sid}{(e_1,...,e_n)}}
    	
    	\inferrule[Sem-Offset-In]
    	{
    		\sid \in \idsI
    	}
    	{(\top, \top, \bot) \types{}  \offsetAccess{\sid}{p}{o}}
    	
    	\inferrule[Sem-Aggr]
    	{
    		\stt^1 \types{} e_1\\
    		\stt \stCon \stt^1\\
    		\dots\\
    		\stt^n \types{} e_n\\
    		\stt \stCon \stt^n\\
    	}
    	{\stt \types{} \aggr{id}{e_1, ..., e_n}{d}{f}}
    	
    	\inferrule[Sem-Offset-Out]
    	{
    		\sOutput{\sid}{n_p}{s}{e}{c} \in \outputs\\
    	    \sOutput{\sid'}{n_p'}{s'}{e'}{c'} \in \outputs\\
            \forall 1 \leq k \leq n.\ e_k = \parameterAccess{i} \land \exists 1 \leq l \leq n_p.\ i \paramEq{\sid}{\sid'} l\\
            \stt \types{} \sOutput{\sid}{n_p}{s}{e}{c}\\
            \stt'\stCon \stt
        }
    	{\stt' \types{\sid'} \offsetAccess{\sid}{p}{o}}
    	
    	\inferrule[Sem-Hold]
    	{
    		\stt^1 \types{} e_1\\
    		\stt \stCon \stt^1\\
    		\dots\\
    		\stt^n \types{} e_n\\
    		\stt \stCon \stt^n\\
    	}
    	{\stt \types{} \holdAccess{id}{e_1,...,e_n}}
    	
    	\inferrule[Sem-Parameter-Access]
    	{ }
    	{(\top, \top, \bot) \types{} \parameterAccess{i}}
    	
    	\inferrule[Sem-Default]
    	{
  		    \stt^1 \types{} e_1\\
  		    \stt^2 \types{} e_2\\
			\stt \stCon \stt^1\\
			\stt \stCon \stt^2
        }
    	{\stt \types{} \default{e_1}{e_2}}
    	
    	\inferrule[Sem-Function]
    	{
    		\stt^1 \types{} e_1\\
    		\stt \stCon \stt^1\\
    		\dots\\
    		\stt^n \types{} e_n\\
    		\stt \stCon \stt^n\\
    	}
    	{\stt \types{} \functionExp{f}{e_1,...,e_n}}
    	
    	\inferrule[Sem-Tuple]
    	{
    		\stt^1 \types{} e_1\\
    		\stt \stCon \stt^1\\
    		\dots\\
    		\stt^n \types{} e_n\\
    		\stt \stCon \stt^n\\
    	}
    	{\stt \types{} \tuple{e_1,...,e_n}}
    \end{mathpar}
	\caption{Inference rules for the semantic type analysis of expressions.}
    \label{fig:semantic_expr}
\end{figure}

\begin{figure}[H]
	\begin{mathpar}
		\inferrule[sem-output]
		{
		(\top, \st_e^s, \bot) \types{\sid} \sSpawn{p_s}{e_s^f}{e_s}\\
		\stt^e \types{\sid} \sEval{\pt_e}{e_e^f}{e_e}\\
		(\st_s^c, \st_e^c, \st_c^c) \types{\sid} \sClose{\pt_c}{e_c^f}\\
		\st_s^c \in \{\top, \st_e^s\}\\
		\st_c^c \in \{\bot, \st_e^c\}\\
		\pt_e = \lf{x} \rightarrow \st_e^s = \st_s^e \land \st_e^c \land \st_c^e\\
		\pt_c = \lf{x} \rightarrow \st_e^s = \st_s^c \land \st_e^c = \st_c^c\\
		\st_e^s \stCon \st_s^e\\
		\st_c^e \stCon \st_e^c
		}
		{(\st_e^s, \st_e^e, \st_e^c) \types{} \sOutput{\sid}{p\_def}{\sSpawn{p_s}{e_s^f}{e_s}}{\sEval{p_e}{e_e^f}{e_e}}{\sClose{p_c}{e_c^f}}}
		
		\inferrule[sem-input]
		{ }
		{(\top, \top, \bot) \types{} \sInput{\sid}}
		
		\inferrule[sem-spawn]
		{
		 \stt_f \types{\sid} e_s^f\\
		 \stt' \types{\sid} e_s\\
		 \stt \stCon \stt_f\\
		 \stt \stCon \stt'\\
		 Expr(e_s^f) \stCon \st_e
		}
		{(\st_s, Expr(e_s^f), \st_c) \types{\sid} \sSpawn{p\_def}{e_s^f}{e_s}}
		
		\inferrule[sem-eval]
		{
		 \stt_f \types{\sid} e_e^f\\
		 \stt' \types{\sid} e_e\\
		 \stt \stCon \stt_f\\
		 \stt \stCon \stt'\\
		 Expr(e_e^f) \stCon \st_e
		}
		{(\st_s, Expr(e_e^f), \st_c) \types{\sid} \sEval{p\_def}{e_e^f}{e_e}}
		
		\inferrule[sem-close]
		{
		 \stt \types{\sid} e_c^f\\
		 \st_c \stCon Expr(e_c^f)
		}
		{(\st_s, Expr(e_c^f), \st_c) \types{\sid} \sClose{p\_def}{e_c^f}}
	\end{mathpar}
    \caption{Inference rules for the semantic type analysis of streams.}
    \label{fig:semantic_streams}
\end{figure}

\section{Benchmarking Specifications}
\label{appendix:benchmarks}

\subsection{Intruder}
\label{ex:drone_distance}
The following specification monitors the distance from one drone to multiple other drones.
It is the original version of the simplified specification of \Cref{sec:intro:example} as presented in \cite{DBLP:conf/fm/BaumeisterFKS24}.
\begin{lstlisting}
import math
input lat: Float
input lon: Float
input intruder_id: UInt
input intruder_lat: Float
input intruder_lon: Float

output intruder_pos(id)
    spawn with intruder_id
    eval when id = intruder_id with  (intruder_lat, intruder_lon)
    close @true@ when stale(id).hold(or: false)
output distance(id)
    spawn with intruder_id
    eval @((intruder_id && intruder_lat && intruder_lon) || (lat &&lon))@
    with sqrt((intruder_pos(id).hold().0.defaults(to: 0.0) - lat.hold(or: 0.0))**2.0 + (intruder_pos(id).hold().1.defaults(to: 0.0) - lon.hold(or: 0.0))**2.0)
    close @true@ when stale(id).hold(or: false)
output closer(id)
    spawn with intruder_id
    eval with distance(id).offset(by: -1).defaults(to: distance(id)) >= distance(id)
    close @true@ when stale(id).hold(or: false)
output stale(id)
    spawn with intruder_id
    eval @10s@ with intruder_pos(id).aggregate(over: 10s, using: count) = 0
    close @true@ when stale(id).hold(or: false)

trigger(id)
    spawn when distance(intruder_id).hold(or: 1.0) < 0.1 with intruder_id
    eval @1Hz@ when closer(id).aggregate(over_exactly: 5s, using: forall).defaults(to: false) with "Intruder {{}} detected".format(id)
    close @true@ when stale(id).hold(or: false)
\end{lstlisting}
%

\subsection{Waypoint Mission}
\label{ex:waypoints}
The example demonstrates how a waypoint mission with arbitrary many waypoints can be realized in \rtlola.
For that streams are parameterized over the x and y coordinates of the waypoints, such that a single stream instance is responsible for a single waypoint.
\begin{lstlisting}
	input pos: (Float, Float)
	input waypoint: (Float, Float)

	output waypoint_distance(wx: Float, wy: Float)
  		spawn with (waypoint.0, waypoint.1)
  		eval with sqrt((pos.0 - wx)**2.0 + (pos.1 - wy)**2.0)
  		close when waypoint_reached(wx, wy)

	output waypoint_approaching(wx: Float, wy: Float)
  		spawn with (waypoint.0, waypoint.1)
  		eval with waypoint_distance(wx, wy) <= waypoint_distance(wx, wy).offset(by: -1).defaults(to: 0.0)
  		close when waypoint_reached(wx, wy)

	output waypoint_reached(wx: Float, wy: Float)
  		spawn with (waypoint.0, waypoint.1)
  		eval with waypoint_distance(wx, wy) < 5.0
  		close when waypoint_reached(wx, wy)
\end{lstlisting}
Assume the input stream \lstinline{pos} contains the current position of a drone as an $(x, y)$ coordinate, while the input stream \lstinline{waypoint} contains the $(x, y)$ coordinates of the next waypoint to be monitored.
Each output stream is parameterized over these waypoint coordinates through identical spawn clauses.
Each instance is closed when the corresponding waypoint is reached, implemented through the \lstinline{waypoint\_reached} output stream accesses in the close clauses.
For that, \lstinline{waypoint\_distance} stream calculates the euclidian distance between the current position of the drone and the waypoint.
Once, this distance is below a certain threshold, the corresponding instance of the \lstinline{waypoint\_reached} output stream evaluates to true.
Additionally, the \lstinline{waypoint\_approaching} stream monitors, again for each waypoint, if the drone is getting closer the waypoints through a temporal offset.
Concretely, it compares the current distance to the waypoint with the previous one and asserts that the latter one is higher.
The access of the previous value is implemented through the \lstinline{offset} access accompanied by a default value, as a previous value does not exist in the first step of the monitor execution.

\subsection{Watchdog}
\label{ex:watchdog}
The example realizes a monitor that supervises the reactivity of electronic
control units (ECUs) inside an aircraft.
\begin{lstlisting}
	input ping: Int
	input pong: Int

	output pong_of_node(node: Int)
    	spawn with pong
    	eval with pong = node

	output is_alive(node: Int)
    	spawn with ping
    	eval @1min with pong_of_node(node).aggregate(over: 1min, using: exists)
    	close when is_alive(node) = is_alive(node)
\end{lstlisting}
The values of the two input streams \streamName{ping} and \streamName{pong} represent the numeric identifier of the ECU.
If a value on the \streamName{ping} stream is received, an alive request has been issued to that ECU that needs to be answered by this unit realized with the \streamName{pong} stream.
The parameterized output stream \streamName{pong\_of\_node} groups the pong responses by spawning a new instance for each ECU that evaluates only if pong from the same ECU is received.
The \streamName{is\_alive} output stream implements a watchdog for each ping request.
This watchdog checks if the response was received within a defined period. 
Again, we need a parametrized output stream that spawns with each ping.
Each instance is then evaluated after one minute, stating with the creation of the instance.
The evaluation aggregates over the pong responses of this unit to check whether the unit responded within the last minute.
A final close declaration deletes the instance from the memory after the first evaluation to prepare the monitor for the next ping of this ECU.

\subsection{RCC}
\label{ex:rcc}
The example is taken from \cite{volostream} and monitors the remote control unit of an unmanned aircraft.
\begin{lstlisting}
    input seq_number : Int64

    /// Helper Streams
    output lost_connection_to_master @true@ := false
    output switch_to_secondary @true@ := false
    output both_rc_disconnected @true@ := false

    /// Property 1: Log message increment
    output valid_seq_number := seq_number = seq_number.offset(by: -1, or: -1) + 1
    /// Property 7: RC fallback test
    output main_fallback_valid_dyn
        spawn when lost_connection_to_master
        close when switch_to_secondary || both_rc_disconnected
        eval @200ms@ with false
    output main_fallback_valid @true@ := main_fallback_valid_dyn.hold(or: true)
\end{lstlisting}

\subsection{FFD}
\label{ex:ffd}
The example is taken from \cite{volostream} and monitors the flight phases of an unmanned aircraft.
\begin{lstlisting}
import math

constant ROTOR_1 : UInt8 := 1
constant ROTOR_2 : UInt8 := 2
constant EPSILON_RPM_ON : Float64 := 1.0
constant EPSILON_RPM : Float64 := 0.5


input rpm : Int64
input src : UInt8

output rpm_1 eval when src == ROTOR_1 with abs(cast<Int64, Float64>(rpm))
output rpm_2 eval when src == ROTOR_2 with abs(cast<Int64, Float64>(rpm))

output rpm_on_check @rpm@ := if rpm_1.hold(or: 0.0) + rpm_2.hold(or: 0.0) / 2.0 > EPSILON_RPM_ON then 1.0 else 0.0
output rpm_on @1s@ := rpm_on_check.aggregate(over: 1s, using: avg).defaults(to: 0.0)  > EPSILON_RPM
output take_off @1s@ := false
output landed @1s@ := false
output rpm_in_air @1s@ := false


output phase_1 := !take_off && !landed && rpm_on && !rpm_in_air
\end{lstlisting}

\subsection{Geofence}
\label{ex:geofence}
This example demonstrates a geofence with four points spanning polygone an autonomous aircraft is not allowed to leave.
\begin{lstlisting}
import math
/// Position
input gps__latitude :Float64
input gps__longitude :Float64
input gps__height :Float64
/// Velocity
input gps__speed_x :Float64
input gps__speed_y :Float64
input gps__speed_z :Float64
/// Accuracy Checks
input gps__horizontal_accuracy :Float64
input gps__vertical_accuracy :Float64

///****************///
///****GEOFENCE****///
///****************///
constant c_epsilon : Float64 := 0.0000001
constant close_to_geofence : Float64 := 3.00000000000000000000
constant min_time_to : Float64 := 5.0
constant inside_g1: Bool := true
constant inside_g2: Bool := true
constant inside_g3: Bool := true

// Computes the vehicle line
output lat_in_rad eval when gps_condition with gps__latitude  * 3.14159265359 / 180.0
output lon_in_rad eval when gps_condition with gps__longitude  * 3.14159265359 / 180.0
output filtered_height eval when gps_condition with gps__height
output velocity_xy eval when gps_condition with sqrt(gps__speed_x**2.0 + gps__speed_y**2.0)
output initial_height eval when gps_condition with initial_height.offset(by: -1).defaults(to: gps__height)
output x eval when gps_condition with lat_in_rad
output y eval when gps_condition with lon_in_rad
output delta_x eval when gps_condition with x - x.last(or: x)
output delta_y eval when gps_condition with y - y.last(or: x)
output isFnc eval when gps_condition with x != x.last(or: x)
output gradient eval when gps_condition with if isFnc then delta_y / delta_x else 0.0
output y_intercept eval when gps_condition with y - (gradient * x)
output dstToPnt eval when gps_condition with sqrt(delta_x**2.0 + delta_y**2.0)
output orientation_x eval when gps_condition with if abs(delta_x) < c_epsilon then false else delta_x < 0.0
output orientation_y eval when gps_condition with if abs(delta_y) < c_epsilon then false else delta_y < 0.0

constant height_violation : Bool := false

///******************///
///****Conditions****///
///******************///
output condition_horizontal_accuracy := ((gps__horizontal_accuracy) < 0.01)
output condition_vertical_accuracy := ((gps__vertical_accuracy) < 0.01)
output gps_condition := condition_horizontal_accuracy && condition_vertical_accuracy

// Polygonline g0: (76.07671638200421, 17.139048054333795) to (73.89655156587213, 52.81763088322348)
constant gradient_g1_0 :Float64 := -0.06110570104726112
constant y_intercept_g1_0 :Float64 := 21.787769142229816
output g1_0_is_intersecting eval when gps_condition with gradient != gradient_g1_0 or y_intercept = y_intercept_g1_0
output g1_0_x eval when gps_condition with if isFnc and g1_0_is_intersecting then (y_intercept - y_intercept_g1_0) / (gradient_g1_0 - gradient) else x
output g1_0_y eval when gps_condition with gradient_g1_0 * g1_0_x + y_intercept_g1_0
output g1_0_x_deg eval when gps_condition with g1_0_x * 180.0 / 3.14159265359
output g1_0_y_deg eval when gps_condition with g1_0_y * 180.0 / 3.14159265359
output g1_0_orientation eval when gps_condition with ( (if abs(g1_0_x - x.last(or: x)) < c_epsilon then false else g1_0_x -  x.last(or: x) < 0.0 ) = orientation_x ) and ( (if abs(g1_0_y -  y.last(or: y)) < c_epsilon then false else g1_0_y - y.last(or: y) < 0.0 ) = orientation_y)
output g1_0_violated_cnd eval when gps_condition with (!isFnc or g1_0_is_intersecting) and g1_0_orientation and (g1_0_x < 73.89655156587213 and g1_0_x > 17.139048054333795)and (g1_0_y < 52.81763088322348 and g1_0_y > 17.139048054333795)
output g1_0_is_violated eval when gps_condition with g1_0_violated_cnd and (sqrt((x.last(or: x) - g1_0_x)**2.0 + (y.last(or: y) - g1_0_y)**2.0) <= dstToPnt)
output g1_0_distance_to eval when gps_condition with if g1_0_violated_cnd then 6371.0 * 2.0 * arcsin(sqrt((sin((x - g1_0_x) / 2.0) ** 2.0) + cos(x) * cos(g1_0_x) * (sin((y - g1_0_y) / 2.0) ** 2.0))) * 1000.0 else 1000.0
output g1_0_time_to eval when gps_condition with g1_0_distance_to / velocity_xy
trigger if gps_condition then g1_0_is_violated.hold(or: false) else false "Violation: g1_0!"

// Polygonline g1: (56.286685329564456, 37.16234411824784) to (55.16368257693344, 76.79656929555932)
constant gradient_g1_1 :Float64 := -0.02833416693796948
constant y_intercept_g1_1 :Float64 := 38.75718045676067
output g1_1_is_intersecting eval when gps_condition with gradient != gradient_g1_1 or y_intercept = y_intercept_g1_1
output g1_1_x eval when gps_condition with if isFnc and g1_1_is_intersecting then (y_intercept - y_intercept_g1_1) / (gradient_g1_1 - gradient) else x
output g1_1_y eval when gps_condition with gradient_g1_1 * g1_1_x + y_intercept_g1_1
output g1_1_x_deg eval when gps_condition with g1_1_x * 180.0 / 3.14159265359
output g1_1_y_deg eval when gps_condition with g1_1_y * 180.0 / 3.14159265359
output g1_1_orientation eval when gps_condition with ( (if abs(g1_1_x - x.last(or: x)) < c_epsilon then false else g1_1_x -  x.last(or: x) < 0.0 ) = orientation_x ) and ( (if abs(g1_1_y -  y.last(or: y)) < c_epsilon then false else g1_1_y - y.last(or: y) < 0.0 ) = orientation_y)
output g1_1_violated_cnd eval when gps_condition with (!isFnc or g1_1_is_intersecting) and g1_1_orientation and (g1_1_x < 55.16368257693344 and g1_1_x > 37.16234411824784)and (g1_1_y < 76.79656929555932 and g1_1_y > 37.16234411824784)
output g1_1_is_violated eval when gps_condition with g1_1_violated_cnd and (sqrt((x.last(or: x) - g1_1_x)**2.0 + (y.last(or: y) - g1_1_y)**2.0) <= dstToPnt)
output g1_1_distance_to eval when gps_condition with if g1_1_violated_cnd then 6371.0 * 2.0 * arcsin(sqrt((sin((x - g1_1_x) / 2.0) ** 2.0) + cos(x) * cos(g1_1_x) * (sin((y - g1_1_y) / 2.0) ** 2.0))) * 1000.0 else 1000.0
output g1_1_time_to eval when gps_condition with g1_1_distance_to / velocity_xy
trigger if gps_condition then g1_1_is_violated.hold(or: false) else false "Violation: g1_1!"

// Polygonline g2: (87.35977750932717, 64.42663705638026) to (60.602330928870806, 44.93835038321772)
constant gradient_g1_2 :Float64 := 1.373001486954943
constant y_intercept_g1_2 :Float64 := -55.51846736397893
output g1_2_is_intersecting eval when gps_condition with gradient != gradient_g1_2 or y_intercept = y_intercept_g1_2
output g1_2_x eval when gps_condition with if isFnc and g1_2_is_intersecting then (y_intercept - y_intercept_g1_2) / (gradient_g1_2 - gradient) else x
output g1_2_y eval when gps_condition with gradient_g1_2 * g1_2_x + y_intercept_g1_2
output g1_2_x_deg eval when gps_condition with g1_2_x * 180.0 / 3.14159265359
output g1_2_y_deg eval when gps_condition with g1_2_y * 180.0 / 3.14159265359
output g1_2_orientation eval when gps_condition with ( (if abs(g1_2_x - x.last(or: x)) < c_epsilon then false else g1_2_x -  x.last(or: x) < 0.0 ) = orientation_x ) and ( (if abs(g1_2_y -  y.last(or: y)) < c_epsilon then false else g1_2_y - y.last(or: y) < 0.0 ) = orientation_y)
output g1_2_violated_cnd eval when gps_condition with (!isFnc or g1_2_is_intersecting) and g1_2_orientation and (g1_2_x < 60.602330928870806 and g1_2_x > 64.42663705638026)and (g1_2_y < 44.93835038321772 and g1_2_y > 64.42663705638026)
output g1_2_is_violated eval when gps_condition with g1_2_violated_cnd and (sqrt((x.last(or: x) - g1_2_x)**2.0 + (y.last(or: y) - g1_2_y)**2.0) <= dstToPnt)
output g1_2_distance_to eval when gps_condition with if g1_2_violated_cnd then 6371.0 * 2.0 * arcsin(sqrt((sin((x - g1_2_x) / 2.0) ** 2.0) + cos(x) * cos(g1_2_x) * (sin((y - g1_2_y) / 2.0) ** 2.0))) * 1000.0 else 1000.0
output g1_2_time_to eval when gps_condition with g1_2_distance_to / velocity_xy
trigger if gps_condition then g1_2_is_violated.hold(or: false) else false "Violation: g1_2!"

// Polygonline g3: (0.8691110136063163, 30.889711873136804) to (39.335291068837705, 6.808748494096017)
constant gradient_g1_3 :Float64 := -1.597368819916523
constant y_intercept_g1_3 :Float64 := 32.278002707317576
output g1_3_is_intersecting eval when gps_condition with gradient != gradient_g1_3 or y_intercept = y_intercept_g1_3
output g1_3_x eval when gps_condition with if isFnc and g1_3_is_intersecting then (y_intercept - y_intercept_g1_3) / (gradient_g1_3 - gradient) else x
output g1_3_y eval when gps_condition with gradient_g1_3 * g1_3_x + y_intercept_g1_3
output g1_3_x_deg eval when gps_condition with g1_3_x * 180.0 / 3.14159265359
output g1_3_y_deg eval when gps_condition with g1_3_y * 180.0 / 3.14159265359
output g1_3_orientation eval when gps_condition with ( (if abs(g1_3_x - x.last(or: x)) < c_epsilon then false else g1_3_x -  x.last(or: x) < 0.0 ) = orientation_x ) and ( (if abs(g1_3_y -  y.last(or: y)) < c_epsilon then false else g1_3_y - y.last(or: y) < 0.0 ) = orientation_y)
output g1_3_violated_cnd eval when gps_condition with (!isFnc or g1_3_is_intersecting) and g1_3_orientation and (g1_3_x < 39.335291068837705 and g1_3_x > 30.889711873136804)and (g1_3_y < 6.808748494096017 and g1_3_y > 30.889711873136804)
output g1_3_is_violated eval when gps_condition with g1_3_violated_cnd and (sqrt((x.last(or: x) - g1_3_x)**2.0 + (y.last(or: y) - g1_3_y)**2.0) <= dstToPnt)
output g1_3_distance_to eval when gps_condition with if g1_3_violated_cnd then 6371.0 * 2.0 * arcsin(sqrt((sin((x - g1_3_x) / 2.0) ** 2.0) + cos(x) * cos(g1_3_x) * (sin((y - g1_3_y) / 2.0) ** 2.0))) * 1000.0 else 1000.0
output g1_3_time_to eval when gps_condition with g1_3_distance_to / velocity_xy
trigger if gps_condition then g1_3_is_violated.hold(or: false) else false "Violation: g1_3!"
\end{lstlisting}

\subsection{Artificial Benchmarks}
\begin{figure}
	\begin{subfigure}{0.49\textwidth}
		\begin{lstlisting}
			input i1: Bool
			input i2: Bool
			input i3: Bool
			output s1
				eval when i1 && i2 && i3 with s2
			output s2
				eval when i1 && i2 with s3
			output s3
				eval when i1 with i1
		\end{lstlisting}
		\caption{Conjuncts Benchmark}
		\label{fig:bench_conj}
	\end{subfigure}
	\hfill
	\begin{subfigure}{0.49\textwidth}
		\begin{lstlisting}
			input bench: Int
			output s1(p1: Int, p2: Int, p3: Int)
				spawn with (bench, bench, bench)
				eval with s2(p1, p2)
			output s2(p1: Int, p2: Int)
				spawn with (bench, bench)
				eval with s3(p1)
			output s3(p1: Int)
				spawn with (bench)
			eval with bench
		\end{lstlisting}
		\caption{Parameter Benchmark}
		\label{fig:bench_param}
	\end{subfigure}
	\begin{center}
	\begin{subfigure}{0.49\textwidth}
		\begin{lstlisting}
			input bench: Int
			output s1 := s2
			output s2 := s3
			output s3 := bench
		\end{lstlisting}
		\caption{Synchronous Access Benchmark}
		\label{fig:bench_sync}
	\end{subfigure}
	\end{center}
	\caption{Example of the benchmarking specifications with $n=3$}
	\label{fig:bench}
\end{figure}

\end{document}